\DeclarePairedDelimiter{\norm}{\lVert}{\rVert}
\DeclarePairedDelimiter{\abs}{\lvert}{\rvert}
\DeclareMathOperator{\Exp}{E}
\DeclareMathOperator{\bExp}{\overline{E}}
\newcommand{\sumN}{\sum_{t=1}^N}
\newcommand{\inv}{^{-1}\xspace}
\newcommand{\z}{q\xspace}
\newcommand{\p}{\bm\rho\xspace}
\newcommand{\tran}{^\mathsf{T}\xspace}
\newcommand{\dd}{\,\mathrm{d}\xspace}
\newcommand{\Gq}{G(\z)\xspace}
\newcommand{\Gqi}{G\inv(\z)\xspace}
\newcommand{\Cqp}{C(\z,\p)\xspace}
\newcommand{\Ciqp}{C_\text{i}(\z,\p)\xspace}
\newcommand{\Cfq}{C_\text{f}(\z)\xspace}
\newcommand{\Coq}{C_0(\z)\xspace}
\newcommand{\Tqp}{T(\z,\p)\xspace}
\newcommand{\Toq}{T_0(\z)\xspace}
\newcommand{\Qqp}{Q(\z,\p)\xspace}
\newcommand{\Qqps}{Q(\z,\pstar)\xspace}
\newcommand{\Qiqp}{Q\inv(\z,\p)\xspace}
\newcommand{\Sqp}{S(\z,\p)\xspace}
\newcommand{\Soq}{S_0(\z)\xspace}
\newcommand{\Aqp}{A(\z,\p)\xspace}
\newcommand{\Bqp}{B(\z,\p)\xspace}
\newcommand{\Kqp}{K(\z,\p)\xspace}
\newcommand{\w}{e^{j\omega}\xspace}
\newcommand{\Awp}{A(\w,\p)\xspace}
\newcommand{\Kwp}{K(\w,\p)\xspace}
\newcommand{\Qwp}{Q(\w,\p)\xspace}
\newcommand{\Qiwp}{Q\inv(\w,\p)\xspace}
\newcommand{\Qdw}{Q_\text{d}(\w)\xspace}
\newcommand{\Qdiw}{Q_\text{d}\inv(\w)\xspace}
\newcommand{\Phiy}{\Phi_y(\omega)\xspace}
\newcommand{\Phid}{\Phi_d(\omega)\xspace}
\newcommand{\Phixy}{\Phi_{x,y}(\omega)\xspace}
\newcommand{\Gw}{G(\w)\xspace}
\newcommand{\Tow}{T_0(\w)\xspace}
\newcommand{\Rt}{r(t)\xspace}
\newcommand{\Dt}{d(t)\xspace}
\newcommand{\Ut}{u(t)\xspace}
\newcommand{\Yt}{y(t)\xspace}
\newcommand{\Vt}{v(t)\xspace}
\newcommand{\Ytp}{y(t,\p)\xspace}
\newcommand{\Etp}{e(t,\p)\xspace}
\newcommand{\Utp}{u(t,\p)\xspace}
\newcommand{\Etv}{\bar{e}(t)\xspace}
\newcommand{\efv}{\bar{e}_\text{f}\xspace}
\newcommand{\Efv}{\efv(t)\xspace}
\newcommand{\utv}{\bar{u}\xspace}
\newcommand{\Utv}{\utv(t)\xspace}
\newcommand{\Dtv}{\bar{d}(t)\xspace}
\newcommand{\Rtv}{\bar{r}(t)\xspace}
\newcommand{\Qdq}{Q_\text{d}(\z)\xspace}
\newcommand{\Qdiq}{Q_\text{d}\inv(\z)\xspace}
\newcommand{\Yd}{y_\text{d}(t)\xspace}
\newcommand{\pd}{\p_\text{d}\xspace}
\newcommand{\pstar}{\p^\star\xspace}
\newcommand{\Cdq}{C_\text{d}(\z)\xspace}
\newcommand{\Aqpd}{A(\z,\pd)\xspace}
\newcommand{\Bqpd}{B(\z,\pd)\xspace}
\newcommand{\Vcd}{V_\text{c}(\z,\pd)}
\newcommand{\Vped}{V_\text{pe}(\z,\pd)}
\newcommand{\Vdr}{V_\text{dr}(\p)\xspace}
\newcommand{\Vpe}{V_\text{pe}(\p)\xspace}
\newcommand{\Vc}{V_\text{c}(\p)\xspace}
\newcommand{\cor}{f_{\varepsilon,x}(\tau)}
\newcommand{\ecor}{\hat{f}_{\varepsilon,x}(\tau)}
\newcommand{\eCor}{\hat{\bm f}_{\varepsilon,x}}
\newcommand{\ee}{\varepsilon(t,\p)\xspace}
\newcommand{\eek}{\varepsilon_{K}(t,\p)\xspace}
\newcommand{\elin}{\varepsilon_\text{lin}(t,\p)\xspace}
\newcommand{\elind}{\varepsilon_\text{lin}(t,\pd)\xspace}
\newcommand{\enl}{\varepsilon_\text{nl}(t,\p)\xspace}
\newcommand{\enld}{\varepsilon_\text{nl}(t,\pd)\xspace}
\newcommand{\osapred}{\left(t \,\middle|\, t-1,\p\right)\xspace}
\newcommand{\uu}{\hat{u}\osapred}
\newcommand{\ulin}{\hat{u}_\text{lin}\osapred}
\newcommand{\unl}{\hat{u}_\text{nl}\osapred}
\newcommand{\reg}{\bm\varphi(t)}
\newcommand{\regt}{\bm\varphi\tran(t)}
\newcommand{\regk}{\bm\varphi_K(t)}
\newcommand{\regkt}{\bm\varphi_K\tran(t)}
\newcommand{\uk}{\utv_K(t)\xspace}
\newcommand{\zzz}{\bm\zeta(t)\xspace}
\newcommand{\zzt}{\bm\zeta\tran(t)\xspace}
\newcommand{\pushright}[1]{\ifmeasuring@#1\else\omit\hfill$\displaystyle#1$\fi\ignorespaces}
\newcommand{\pushleft}[1]{\ifmeasuring@#1\else\omit$\displaystyle#1$\hfill\fi\ignorespaces}
\begin{document}
\title{Data-driven load disturbance rejection}
\author[1]{Róger W. P. da Silva}
\author[2]{Diego Eckhard}
\authormark{DA SILVA and ECKHARD}
\address[1]{\orgdiv{Programa de Pós-Graduação em Engenharia Elétrica}, \orgname{Universidade Federal do Rio Grande do Sul},
\orgaddress{\state{RS}, \country{Brazil}}}
\address[2]{\orgdiv{Dept. Mat. Pura e Aplicada}, \orgname{Universidade Federal do Rio Grande do Sul},
\orgaddress{\state{RS}, \country{Brazil}}}
\corres{Diego Eckhard, Dept. Mat. Pura e Aplicada, Universidade Federal do Rio Grande do Sul, Av. Bento Gonçalves, 9500, Prédio 43111, Sala B218, Agronomia
91509900 - Porto Alegre, RS - Brazil.\\
\email{diegoeck@ufrgs.br}}
\abstract[Summary]{Data-driven direct methods are still growing in popularity almost three decades after they were introduced.
These methods use data collected from the process to identify optimal controller's parameters with little knowledge about the process itself.
However, most of those works focus on the problem of reference tracking, whereas many of the problems faced in real-life are of disturbance rejection or attenuation.
Also, the vastly majority of those works identify the parameters of linearly parametrized controllers, which amounts to fixing the poles of the controller's transfer function.
Although the identification of the controller's poles is not prohibitive, as hinted by some of the papers, there is little effort on presenting a data-driven solution capable of doing so.
With all that in mind, this work proposes a data-driven approach which is able to identify the zeros and the poles of a linear controller aiming at disturbance rejection.
Two different one-step ahead predictors are proposed, one that is linear on the parameters and another that is non-linear.
Also, two different techniques are employed to estimate the controller parameters, the first one minimizes the quadratic norm of the prediction error while the second one minimizes the correlation between the prediction error and an external signal.
Simulations show the effectiveness of the proposed methods to estimate the optimal controller parameters of restricted order controllers aiming at disturbance rejection.}
\keywords{load disturbance, data-driven control, correlation}
\jnlcitation{\cname{%
\author{da Silva, Roger W. P.}, and
\author{Eckhard, Diego} (\cyear{<year>}),
\ctitle{Data-driven load disturbance rejection with the correlation approach}, \cjournal{Int J Adapt Control Signal Process} <year> <vol> Page <xxx>-<xxx>}
}
\maketitle
\section{Introduction}

Direct data-driven methods for controller design form a category of methods that serve for adjusting a controller's parameters without the needing of a process model.
These methods rely on the availability of some input-output data collected from the process \cite{bazanella2011data}.
Some methods such as the iterative feedback tuning (IFT) \cite{hjalmarsson1998Aiterative}, require that those data are collected during a specific experiment with a uniquely crafted excitation signal, while others work with any signal, as long as it is rich enough.
Some classical methods require performing a sequence of experiments while changing the controller after each iteration, such as the IFT \cite{hjalmarsson1998Aiterative}, the correlation-based tuning (CBT)\cite{karimi2004iterative}, and the frequency-domain tuning (FDT) \cite{kammer2000direct}.
However, for some methods a single experiment is sufficient as it is the case of the non-iterative correlation-base tuning (NCbT) \cite{karimi2007noniterative} or the optimal controller identification (OCI) \cite{campestrini2017data}.
Methods that need a sequence of experiments forcing the characteristics of the excitation signal are called \textbf{iterative}, while the others are called \textbf{one-shot} or simply \textbf{non-iterative}.
Some methods also use an instrumental variable that must be generated from noise-uncorrelated data, to which a repeated experiment is the most common solution, as is the case for the virtual reference feedback tuning (VRFT),\cite{campi2002virtual} and others that followed.

In general, data-driven methods search for the optimal controller parameters by optimizing a cost function that is calculated from the available data.
Many methods aim at \textbf{model matching} that is minimizing the norm of the difference between the closed loop transfer function and a prescribed model, called the \textbf{reference model}.
Examples of those include some classical works\cite{campi2002virtual,karimi2007noniterative}, some newer \cite{campestrini2017data}, and many works that followed.
Other methods, though, have different objectives, such as shaping the $H_\infty$ norm of sensitivity functions\cite{formentin2013data} or dealing with optimal control.\cite{dasilva2018data}
In general, the difference between the closed-loop transfer function and the reference model is non-linear with respect to the controller's parameters, which leads to non-convex optimizations that needs to be solved iteratively.
Some methods approximate that non-convex cost function by a convex one when the controller is \textbf{linearly parametrized}, which means the controller structure is represented by a linear combination of more basic transfer functions.

Although the methods perform fairly well when the controller structure is of full order, in comparison with model-based approaches, they really stand out when tuning a restricted-order controller, as already shown in the literature\cite{campestrini2017data,formentin2014comparison}.
%
Some control literature indicates\cite{szita1996model} that most of industrial applications are more concerned with disturbance rejection, but most of the model-matching techniques focus on set-point response.
Surprisingly, most of the works within the data-driven literature still aim only at reference tracking, shaping the closed-loop complementary sensitivity function.
This work is an attempt to increase the amount of research in the direction of data-driven disturbance rejection with model matching.
The very meaning of disturbance rejection may be too broad\cite{gao2014centrality} and this work deals primarily with load disturbance rejection, addressing the same problem as Reference \citenum{eckhard2018virtual}, although with different approaches.
The ideas presented in this article extend that work and the work of Reference \citenum{dasilva2019extension}.
The work in Reference \citenum{eckhard2018virtual} seeks to shape the load disturbance sensitivity function response to a given disturbance, using virtual input-output signals to identify the optimal controller in a way very similar to the VRFT method. It is developed only for linearly parametrized controllers minimizing a quadratic norm of an error function.
On the other hand, Reference \citenum{dasilva2019extension} presents a modified version of the NCbT method \cite{karimi2007noniterative} which proposes to shape the load disturbance sensitivity function instead of the complementary sensitivity.

This article proposes to generate virtual signals as the ones proposed in Reference \citenum{eckhard2018virtual} and use them to identify the optimal controller. 
Here, the identification is performed either through the minimization of the prediction error's norm,\cite{eckhard2018virtual} or by the minimization of the correlation between an error variable and the experiment input \cite{dasilva2019extension}.
The main differences that sets this work apart from those other two are as follows.
Unlike the works in Reference \citenum{eckhard2018virtual}  and \citenum{dasilva2019extension}, this article is not restricted to linearly parametrized controllers such that the controller structure proposed here allows for the identification of the controller's poles, using two different one-step ahead predictors. Also, we propose the use of two different optimization algorithms, the first one that minimizes the norm of the one-step ahead predicition error and the second that minimizes the correlation of the one-step ahead error and an external signal.

The rest of this article is organized as follows: Section~\ref{sec:model-matching-design} establishes the problem solved along the article, and Section~\ref{sec:data-driven-errors} presents how the prediction errors are obtained from data.
Then, Section~\ref{sec:prediction-error-norm} shows how to estimate the controller's parameters by minimizing the prediction error's norm, while Section~\ref{sec:correlation-error-input} shows how to do the same by minimizing the correlation between the prediction error and the experiment input.
Section~\ref{sec:examples} gives some illustrative examples and Section~\ref{sec:conclusion} draws some conclusions and presents some open matters.

\section{Model Matching Design}
\label{sec:model-matching-design}
\subsection{Preliminaries}

Consider a linear time-invariant discrete-time single-input single-output process
\begin{equation}\label{eq:process}
\Yt=\Gq \Ut + \Vt
\end{equation}
where $\Ut$ is the process input, $\Vt$ is the output noise, which is a stochastic process with zero mean, $\Gq$ is the process transfer operator and $\z$ is the time shift operator $\z x(t)=x(t+1)$.
The process input signal is composed by two terms:
\begin{align}
\Ut &= \Cqp \left[ \Rt - \Ytp \right] + \Dt = \Cqp \Etp + \Dt,\label{eq:process-input}
\end{align}
the first one is the control action that is calculated from the error $\Etp$ between the reference input $\Rt$ and the measured closed-loop output $\Ytp$, the second term is the external disturbance $\Dt$.
Finally, $\Cqp$ is a linear time-invariant controller that is parametrized by $\p \in \mathbb{R}^n$, and belongs to the following user-defined controller class
\begin{equation}
  \mathcal C = \left\{ \Cqp, \p \in \mathcal{P} \subseteq \mathbb{R}^n\right\}.\label{eq:class}
\end{equation}
The block diagram of the final closed-loop system is presented in Figure~\ref{fig:sys-cl}.

\begin{SCfigure}
  \includegraphics{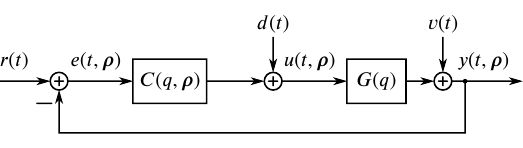}
  \caption{\label{fig:sys-cl}Closed-loop block diagram.}
\end{SCfigure}

The system \eqref{eq:process}--\eqref{eq:process-input} in closed-loop reads:
\begin{align}
\Ytp &= \Tqp\Rt + \Qqp\Dt + \Sqp\Vt \label{eq:y-cl}\\
\Utp &= \Gqi\Tqp\Rt + \Sqp\Dt - \Cqp\Sqp\Vt \label{eq:u-cl}
\end{align}
where the following sensitivity functions are employed:
\begin{align}
  \Tqp &= 1 - \Sqp \label{eq:tqp}\\
  \Sqp &= \left[ 1 + \Gq\Cqp \right]\inv \label{eq:sqp}\\
  \Qqp &= \Gq\Sqp\label{eq:qqp}.
\end{align}
Observe, from \eqref{eq:sqp} and \eqref{eq:qqp} that, for any given controller,
\begin{equation}
  \Qiqp = \Gqi + \Cqp.\label{eq:q-inv}
\end{equation}

The role of the control designer is to choose the parameter vector $\p$ of the controller in order to obtain good performance for the closed-loop system.
Disturbance response design requires the choice of a \textit{desired output} for the closed-loop system considering a specified disturbance signal, that is,
\begin{equation}
  \Yd = \Qdq \Dt,
\end{equation}
where $\Qdq$ is the \textit{reference model}, i.e. the desired load disturbance sensitivity.
The paramaters of the controller are obtained as the solution of the optimization problem:
\begin{equation}
  \pstar = \arg \min_{\p} ~ \Vdr,\label{eq:rho-opt}
\end{equation}
with the disturbance response criterion defined as
\begin{equation}\label{eq:v-dr}
  \Vdr \triangleq \bExp \left[ \left(\Qdq - \Qqp\right) \Dt\right]^2,
\end{equation}
where
\begin{equation}
  \bExp f(t) \triangleq \lim_{N \rightarrow \infty} \frac{1}{N} \sumN \Exp f(t),
\end{equation}
as usually defined,\cite{ljung1999system} and with $\Exp \cdot$ denoting expectation.
An usual choice for the reference model\footnote{considering minimum phase systems} is to include a zero at $1$ to ensure null steady-state gain for constant disturbances.

The \textit{ideal controller} is the one that makes the closed-loop system to match the reference model:
\begin{equation}
  \Cdq = \Qdiq - \Gqi \label{eq:c-d}.
\end{equation}
Observe that the knowledge of the process model $\Gq$ is needed to compute this controller.
Also, this ideal controller \eqref{eq:c-d} may not be in the controllers class \eqref{eq:class}. When it does, the following assumption holds.

\begin{assumption}[Controller matching]\label{ass:matching}
The ideal controller in \eqref{eq:c-d} may be represented with the available controller structure,
\begin{equation}
  \exists~\pd\in \mathcal{P} \text{ such that } C(\z,\pd) = \Cdq,
\end{equation}
or, equivalently,
\begin{equation}
  \Cdq \in \mathcal{C}.
\end{equation}
\end{assumption}

When Assumption~\ref{ass:matching} does not hold, the obtained controller $C(\z,\pstar)$ is not the ideal controller, but it is the best controller that can be used, i.e., the optimal.
In other words, it minimizes \eqref{eq:v-dr} resulting in a closed-loop response that is as close as possible to the desired output $\Yd$.
The next section presents techniques to estimate the optimal controller from experimental data, without the need of a process model.

\section{Data-driven prediction errors}
\label{sec:data-driven-errors}

Previous section presented the disturbance response design, where the user chooses a reference model $\Qdq$ and the parameters of the controller are computed solving an optimization problem.
The objective function depends on $\Qqp$, which, in turn, depends on the process transfer function $\Gq$.
Data-driven design techniques assume that the process transfer function is unknown, but the user can collect a sufficiently rich batch of process input and output data which contains implicitly information about the process.
Suppose the data contain $N$ samples coming from an open-loop experiment in the process, then we define the following dataset:
\begin{equation}\label{eq:z-ol}
  \mathcal{Z}_\text{ol}^N = \lbrace u(1),\,y(1),\,u(2),\,y(2),\,\dots,\,u(N),\,y(N)\rbrace.
\end{equation}
On the other hand, if the data come from a closed-loop experiment where the reference input is excited, then we define
\begin{equation}\label{eq:z-cl}
  \mathcal{Z}_\text{cl}^N = \lbrace r(1),\,u(1),\,y(1),\,r(2),\,u(2),\,y(2),\,\dots,\,r(N),\,u(N),\,y(N)\rbrace
\end{equation}
as the dataset to be employed.

\begin{assumption}[Noise and input uncorrelated\label{ass:noise}]
Both $\Ut$ (in open-loop) and $\Rt$ (in closed-loop) are assumed to be quasi-stationary and uncorrelated with the noise, that is,
\begin{align}
  &\bExp \Ut v(s) = 0 ;\forall t,s, \\
  &\bExp \Rt v(s) = 0 ;\forall t,s
\end{align}
\end{assumption}

In the sequence we present a framework that uses one of the above datasets, to estimate the parameters $\p$ of the controller without using a model of the process.
The objective of this section is to introduce the prediction error employed by that framework.

To identify the controller, the following thought exercise is proposed in Reference \citenum{eckhard2018virtual}:
we know for a fact that the data is collected during an open-loop or closed-loop in the real process, but pretend, for a moment, that the real dataset comes from a closed-loop experiment with the ideal controller in the loop.
This imagined experiment is presented in Figure~\ref{fig:virtual-sys}, the real-life experiment is drawn in black, the virtual-only experiment is in blue.
\begin{SCfigure}
  \includegraphics{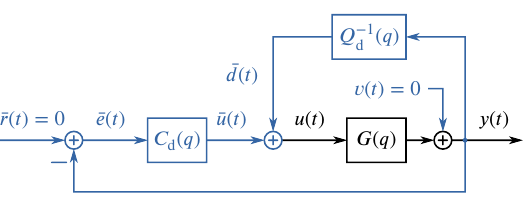}
  \caption{\label{fig:virtual-sys} Virtual closed-loop block diagram showing the real (black) and the virtual (light blue) signals.}
\end{SCfigure}
Now, pretend that the virtual experiment is a noise-free, regulation setup.
Because this is a regulation setup, the ideal controller's input, i.e. the \emph{virtual error}, is simply given by
\begin{equation}
  \Etv = - \Yt.\label{eq:e-virt}
\end{equation}
Since both the noise $\Vt$ and the \emph{virtual reference} $\Rtv$ are null, the measured signal $\Yt$ is only due to the effect of a \emph{virtual disturbance} $\Dtv$ acting in the process input.
Therefore, from \eqref{eq:y-cl}, the virtual disturbance is given by
\begin{equation}
  \Dtv = \Qdiq \Yt,
\end{equation}
whereas the ideal controller's output is the \emph{virtual control action}, obtained from the first term of \eqref{eq:process-input} and given by
\begin{align}
  \Utv &= \Ut - \Dtv = \Ut - \Qdiq \Yt.\label{eq:u-virt}
\end{align}
This concludes the thought exercise, giving the input/output signals \eqref{eq:e-virt} and \eqref{eq:u-virt} needed to identify the controller.
In other words:
Even though the plant $\Gq$ is unknown, when it is fed by \Ut, it generates \Yt as output, so, a ``good'' controller is one that generates $\Utv$ when fed by $\Etv$.
Because those virtual signals are generated from data, the controller design can be seen as the data-driven identification of the dynamical relation between them.

Notice that part of the controller may be know or is fixed, such that it should not be identified, for instance if an integral action is required in the controller.
Therefore, we employ the following more general controller structure along the remaining of the paper:
Let us then employ the following more general controller structure:
\begin{equation}
  \Cqp = \Ciqp\Cfq,\label{eq:ci-cf}
\end{equation}
where $\Ciqp$ is the unknown portion of the controller, which must be \emph{identified}, while $\Cfq$ is some known \emph{fixed} portion.
That fixed portion may include the integral action for instance, and the whole structure is depicted in Figure~\ref{fig:cf-ci}.
\begin{SCfigure}
  \centering
  \includegraphics{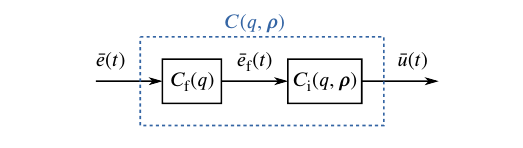}
  \caption{\label{fig:cf-ci}Closed-loop block diagram and the virtual system's signals.}
\end{SCfigure}
We propose in this article to identify the transfer operator $\Ciqp$ using the following input/output relationship
\begin{equation}
  \Utv = \Ciqp \Efv, \label{eq:uf-virt}
\end{equation}
where the new intermediary error is given by
\begin{align}
  \Efv &= \Cfq \Etv = -\Cfq \Yt. \label{eq:ef-virt}
\end{align}
Also, it is assumed that $\Ciqp$ depends on the parameters accordingly with the following classical parametrization:
\begin{align}
  \Ciqp &= \frac{\Bqp}{\Aqp} = \frac{b_0 + b_1\z^{-1} + \cdots + b_{n_b} \z^{-n_b}}{1 + a_1\z^{-1} + \cdots + a_{n_a} \z^{-n_a}},\label{eq:c-i}
\end{align}
whereas the parameters vector is defined as
\begin{equation}
  \p\tran = \begin{bmatrix}
    a_1 & a_2 & \dots & a_{n_b} & b_0 & b_1 & \dots & b_{n_b}
  \end{bmatrix}.
\end{equation}

From the systems identification theory, an one-step ahead predictor $\uu$ of the controller's output may be constructed, which is used to compute the following \emph{prediction error} signal:
\begin{equation}
  \ee = \Utv - \uu.\label{eq:epred}
\end{equation}

Many system identification algorithms may be employed that minimize some function of the error signal \eqref{eq:epred} in order to estimate the controller's parameters.
In this work, we propose the use of two different approaches, one that minimizes the 2-norm of the error and another that minimizes the correlation of the error with an instrumental variable.
These two approaches are described in the next sections.
However, before moving to the next section, let us propose two different predictors for the controller output.

\subsection{Linear predictor}

The first one-step ahead predictor for the controller output is inspired on the ARX (AutoRegressive with eXogenous term) predictor and is given by
\begin{equation}
  \ulin = \left[ 1 - \Aqp \right]\Utv + \Bqp \Efv,\label{eq:u-lin}
\end{equation}
which may be written in a linear regression form as
\begin{equation}
  \ulin = \regt\p,
\end{equation}
where the regressor vector is defined as
\begin{align}
  \regt = \begin{bmatrix}
    -\utv(t-1) &
    \cdots &
    -\utv(t-n_a)&
    \efv(t)&
    \cdots&
    \efv(t-n_b)
  \end{bmatrix}.\nonumber\\
\end{align}
Therefore, the prediction error for the linear predictor is
\begin{align}
  \elin &= \Utv - \ulin \label{eq:epred-lin-1}\\
    &= \Utv - \regt \p.\label{eq:epred-lin}
\end{align}
This predictor is linear on $\p$ which simplifies the analysis and design of identification algorithms.

\subsection{Nonlinear predictor}

The second one-step ahead predictor for the controller output is inspired on the OE (Output Error) predictor and, as such, is given by
\begin{align}
  \unl &= \Ciqp \Efv = \frac{\Bqp}{\Aqp} \Efv, \label{eq:u-nlin}
\end{align}
which makes the prediction error simply
\begin{equation}
  \enl = \Utv - \Ciqp \Efv. \label{eq:epred-nlin}
\end{equation}
Observe that this predictor is non-linear w.r.t. the parameters $\p$, because some parameters appear on its denominator.
Notice that the two proposed predictors are equivalent if $n_a=0$, i.e. if all parameters to be identified are on the numerator of the controller or, equivalently, $\Aqp = 1$.
In the next sections, we will propose the use of two different approaches to identify the parameters of the controller using both the linear \eqref{eq:u-lin} and the non-linear \eqref{eq:u-nlin} predictors.

\begin{remark}
  Observe that when the whole controller's denominator is fixed, i.e. $n_a = 0$ and $\Aqp = 1$, the entire controller's transfer operator becomes \emph{linearly parametrized}.
  Solutions for identifying the controller with this particular constraint have already been proposed before.\cite{eckhard2018virtual,dasilva2019extension}.
\end{remark}

\section{Minimize the 2-norm of the error}
\label{sec:prediction-error-norm}

One possibility for identifying the controller's parameter is minimizing the 2-norm of the prediction error, leading to the well-known least squares and prediction error methods.
This idea has already been employed with data-driven methods such as the VRFT\cite{campi2002virtual} and the VDFT.\cite{eckhard2018virtual}

Following this approach, the controller parameters are estimated by solving and optimization problem that minimizes the 2-norm of a filtered version of the error signal, that is,
\begin{equation}
\hat{\p} = \arg \min_{\p} \Vpe,
\end{equation}
with the prediction error cost function defined as
\begin{equation}
  \Vpe = \frac{1}{N} \sumN \abs{\eek}^2,\label{eq:v-pe}
\end{equation}
where $\eek = \Kqp \ee$ is the prediction error filtered through the filter $\Kqp$.
The latter may be seen as an extra degree of freedom used to modify the objective function.
Some choices for this filter will be given for each method later.

Now, the user may choose to use the linear or the non-linear predictors, resulting in two different algorithms to estimate the controller parameters.

\subsection{Linear predictor}

Observe that by replacing \eqref{eq:u-lin} in \eqref{eq:epred-lin-1}, the linear prediction error may be written
\begin{equation}
  \elin = \Aqp \Utv - \Bqp \Efv.\label{eq:epred-lin-2}
\end{equation}
Also, note that applying the definitions \eqref{eq:ci-cf} and \eqref{eq:c-i} in \eqref{eq:q-inv} gives
\begin{equation}
  \Aqp\Qiqp = \Aqp\Gqi + \Bqp\Cfq. \label{eq:aq-inv}
\end{equation}
Using these two equalities, we will calculate the noiseless prediction error for the linear predictor, and then propose a solution for the parameters estimate.


\begin{lemma}\label{lem:epred-lin-noiseless}
  In the absence of noise, the prediction error obtained from the linear predictor is given by
  \begin{equation}
    \elin \Big\lvert_{\Vt=0} = \Aqp\left[ \Qiqp - \Qdiq \right]\Yt,\label{eq:epred-lin-noiseless}
  \end{equation}
  regardless if the data are collected in open-loop or closed-loop.
\end{lemma}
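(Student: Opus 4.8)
The plan is to collapse the linear prediction error into a single transfer operator acting on the measured output $\Yt$, and then to recognise the operator combination that appears. I would start from the equivalent form \eqref{eq:epred-lin-2} of the error, $\elin = \Aqp\Utv - \Bqp\Efv$, and substitute the definitions \eqref{eq:u-virt} and \eqref{eq:ef-virt} of the virtual signals, $\Utv = \Ut - \Qdiq\Yt$ and $\Efv = -\Cfq\Yt$. After this substitution $\elin$ is expressed purely in terms of the genuinely measured data $\Ut$ and $\Yt$, namely as $\Aqp\Ut - \Aqp\Qdiq\Yt + \Bqp\Cfq\Yt$.

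Next I would impose the noise-free hypothesis. Setting $\Vt = 0$ in the process equation \eqref{eq:process} gives $\Yt = \Gq\Ut$, hence $\Ut = \Gqi\Yt$; substituting this eliminates the last occurrence of $\Ut$ and leaves
\[
  \elin\Big\lvert_{\Vt=0} = \left[\Aqp\Gqi + \Bqp\Cfq - \Aqp\Qdiq\right]\Yt.
\]
It is worth emphasising that the only property of the data used in this step is the plant relation $\Yt = \Gq\Ut$ itself — nothing about how $\Ut$ was generated enters the computation — which is precisely why the conclusion is insensitive to whether the batch $\mathcal{Z}_\text{ol}^N$ or $\mathcal{Z}_\text{cl}^N$ was collected. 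In the closed-loop case one could instead plug in \eqref{eq:y-cl}--\eqref{eq:u-cl} with $\Vt=0$, but that detour reaches the same expression and is unnecessary.

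Finally I would close the argument with the identity \eqref{eq:aq-inv}, $\Aqp\Qiqp = \Aqp\Gqi + \Bqp\Cfq$, which is just \eqref{eq:q-inv} multiplied through by $\Aqp$ together with the parametrisation \eqref{eq:c-i} of $\Ciqp$. Replacing the first two bracketed terms by $\Aqp\Qiqp$ turns the bracket into $\Aqp\Qiqp - \Aqp\Qdiq$; factoring out $\Aqp$ then yields $\elin\Big\lvert_{\Vt=0} = \Aqp\left[\Qiqp - \Qdiq\right]\Yt$, which is \eqref{eq:epred-lin-noiseless}. I do not expect a genuine obstacle here: the proof is a short chain of substitutions, and the only point requiring a little care is to treat \eqref{eq:aq-inv} as an equality of rational transfer operators in $\z$ — so that $\Aqp$ commutes with $\Qiqp$, $\Gqi$ and the rest — rather than as a scalar identity.
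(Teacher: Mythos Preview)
Your argument is correct and rests on the same two ingredients as the paper's proof: substituting the virtual signals \eqref{eq:u-virt}--\eqref{eq:ef-virt} into \eqref{eq:epred-lin-2}, and invoking the identity \eqref{eq:aq-inv}. The difference is in how the ``regardless of open-loop or closed-loop'' clause is handled. The paper treats the two cases separately: it first expands $\elin$ for open-loop data using \eqref{eq:process}, isolates and drops the noise term to reach \eqref{eq:epred-lin-noiseless}; it then redoes the computation for closed-loop data by substituting \eqref{eq:u0-cl}--\eqref{eq:y0-cl} and again dropping the noise terms. You instead observe that once $\Vt=0$ is imposed the plant relation $\Yt=\Gq\Ut$ already gives $\Ut=\Gqi\Yt$ without any reference to how $\Ut$ was produced, so a single computation covers both experiments. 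This is a genuine economy: it makes transparent \emph{why} the result is insensitive to the experiment type, whereas the paper verifies that fact by inspection of two parallel derivations. The trade-off is that the paper's version, by keeping the noise terms explicit before discarding them (see \eqref{eq:epred-lin-ol} and \eqref{eq:epred-lin-cl}), also sets up the expressions reused later in Lemma~\ref{lem:epred-lin-noisy}.
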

\begin{proof}
  First consider the case where the linear prediction error \eqref{eq:epred-lin-2} is obtained from open-loop data. In this case, the excitation signal is $\Ut$ and the prediction error is given by
  \begin{align}
    \elin &= \Aqp \left[ \Ut - \Qdiq \Yt \right] + \Bqp \Cfq \Yt \label{eq:epred-lin-3}\\
      &= \left[\Aqp \Gqi + \Bqp\Cfq \right] \Gq \Ut - \Aqp \Qdiq \Gq\Ut - \left[ \Aqp\Qdiq - \Bqp\Cfq \right]\Vt \label{eq:epred-lin-4}\\
      &= \Aqp \left[ \Qiqp - \Qdiq \right]\Gq\Ut - \left[ \Aqp\Qdiq - \Bqp\Cfq \right]\Vt, \label{eq:epred-lin-ol}
  \end{align}
  where \eqref{eq:epred-lin-3} comes from \eqref{eq:u-virt}, \eqref{eq:ef-virt} and \eqref{eq:epred-lin-2}, while \eqref{eq:epred-lin-4} comes from the open-loop relation \eqref{eq:process},
  and \eqref{eq:epred-lin-ol} comes from \eqref{eq:aq-inv}.
  Removing the noise-dependant term from \eqref{eq:epred-lin-ol} yields \eqref{eq:epred-lin-noiseless}.

  Now, consider the case where the data come from a closed-loop experiment with an initial controller $\Coq$.
  In this case, the excitation signal is $\Rt$, and the process input/output signals from \eqref{eq:u-cl} and \eqref{eq:y-cl} are given by
  \begin{align}
    \Ut &= \Gqi \Toq \Rt - \Coq \Soq \Vt,\label{eq:u0-cl}\\
    \Yt &= \Toq\Rt + \Soq\Vt,\label{eq:y0-cl}
  \end{align}
  where $S_0(q) = \left[ 1 + G(q)C_0(q)\right]\inv$ and $T_0(q) = 1 - S_0(q)$.
  Replacing \eqref{eq:u0-cl} and \eqref{eq:y0-cl} in \eqref{eq:epred-lin-3} and then using \eqref{eq:aq-inv} again allows us to write the closed-loop prediction error as
  \begin{align}
    \elin &= \left[ \Aqp\Gqi + \Bqp\Cfq \right] \Toq\Rt
        - \Aqp\Qdiq \Toq\Rt \nonumber\\
        &\quad {} - \left[ \Aqp\Qdiq - \Bqp\Cfq \right]\Soq\Vt
        - \Aqp\Coq \Soq\Vt \nonumber \\
      &= \Aqp \left[ \Qiqp - \Qdiq \right] \Toq\Rt \nonumber\\
        &\quad {} - \left[ \Aqp\Qdiq - \Bqp\Cfq \right]\Soq\Vt
        - \Aqp\Coq \Soq\Vt. \label{eq:epred-lin-cl}
  \end{align}
  Removing the noise-dependant term from \eqref{eq:epred-lin-cl} yields \eqref{eq:epred-lin-noiseless}.
\end{proof}

\begin{theorem}
  When Assumption~\ref{ass:matching} holds and the linear predictor \eqref{eq:u-lin} is calculated from noiseless data, the ideal parameters map to the global minimum of
  \begin{equation}
    \Vpe = \frac{1}{N} \sumN \abs*{\Kqp \left[ \Utv - \regt\p \right]}^2.\label{eq:v-pe-lin}
  \end{equation}
\end{theorem}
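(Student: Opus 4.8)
The plan is to read the result off Lemma~\ref{lem:epred-lin-noiseless} together with the identity \eqref{eq:q-inv}; there is little genuine work to do. First I would observe that the quantity inside \eqref{eq:v-pe-lin} is exactly the linear prediction error, since $\Utv - \regt\p = \Utv - \ulin = \elin$, so in the noiseless case Lemma~\ref{lem:epred-lin-noiseless} applies verbatim and $\elin = \Aqp\left[\Qiqp - \Qdiq\right]\Yt$, regardless of whether the data come from an open- or a closed-loop experiment.

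Second, I would evaluate at $\p = \pd$. Assumption~\ref{ass:matching} supplies $\pd \in \mathcal{P}$ with $C(\z,\pd) = \Cdq = \Qdiq - \Gqi$, and substituting this into \eqref{eq:q-inv} gives $Q\inv(\z,\pd) = \Gqi + C(\z,\pd) = \Qdiq$. Hence the operator $Q\inv(\z,\pd) - \Qdiq$ is identically zero, the noiseless error signal $\varepsilon_\text{lin}(t,\pd)$ is the zero sequence, filtering it through $K(\z,\pd)$ leaves it zero, and therefore $V_\text{pe}(\pd) = 0$.

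Third, since \eqref{eq:v-pe-lin} is an average of squared magnitudes, $V_\text{pe}(\p) \ge 0$ for every $\p$; the value $0$ is attained at $\pd$, so $\pd$ is a global minimizer, which is precisely what the theorem asserts. The only point I would be careful to state explicitly is that this claims only that $\pd$ realizes the minimum, not that the minimizer is unique: if the parametrization \eqref{eq:c-i} is redundant — for instance a common polynomial factor in $\Aqp$ and $\Bqp$ — other parameter vectors may also yield zero cost, and excluding this would require an additional informativity condition on the virtual regressor $\reg$, which lies outside the scope of this statement. In short, there is no real obstacle: the entire content is carried by Lemma~\ref{lem:epred-lin-noiseless}, already proved, plus the non-negativity of a sum of squares.
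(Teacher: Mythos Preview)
Your proposal is correct and follows essentially the same approach as the paper's own proof: invoke Lemma~\ref{lem:epred-lin-noiseless} to conclude that the noiseless linear prediction error vanishes at $\pd$ (since $Q(\z,\pd)=\Qdq$ under Assumption~\ref{ass:matching}), and then observe that a sum of squared magnitudes is nonnegative, so the zero value attained at $\pd$ is a global minimum. Your added remark about non-uniqueness in the presence of overparametrization is a fair caveat but goes beyond what the theorem claims or what the paper proves.
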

\begin{proof}
  When Assumption~\ref{ass:matching} holds, $Q(\z,\pd) = \Qdq$, then by Lemma~\ref{lem:epred-lin-noiseless} the error evaluates to zero, and so does the cost:
  \begin{equation}
    \Vped \Big\lvert_{\Vt=0} = 0,
  \end{equation}
  which is the global minimum because the cost \eqref{eq:v-pe-lin} is quadratic.
\end{proof}

\begin{corollary}
  When Assumption~\ref{ass:matching} holds and when using the linear predictor \eqref{eq:u-lin}, the ideal parameters may be estimated as
  \begin{equation}
    \hat\p = \left[ \sumN \regk \regkt \right]\inv \sumN \regkt \uk,\label{eq:rho-pe-lin}
  \end{equation}
  where $\regk = \Kqp\reg$ and $\uk = \Kqp \Utv$.
\end{corollary}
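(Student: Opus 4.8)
The plan is to read \eqref{eq:rho-pe-lin} as the solution of the normal equations associated with the quadratic cost \eqref{eq:v-pe-lin}, and then to invoke the theorem proved immediately above to certify that this solution is indeed $\pd$. The first step is to observe that in the linear-predictor setting the filter $\Kqp$ is pre-specified, i.e.\ taken independent of the free parameter $\p$; then filtering commutes with the linear regression form and $\eek = \Kqp\left[\Utv - \regt\p\right] = \uk - \regkt\p$, with $\uk = \Kqp\Utv$ and $\regk = \Kqp\reg$ exactly as in the statement. Consequently \eqref{eq:v-pe-lin} reduces to $\Vpe = \frac1N\sumN\bigl(\uk - \regkt\p\bigr)^2$, an ordinary linear least-squares criterion.

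The second step is the routine minimization: expanding the square exhibits $\Vpe$ as a quadratic in $\p$ whose Hessian equals $\frac{2}{N}\sumN\regk\regkt$, which is automatically positive semidefinite as a sum of outer products, so the criterion is convex. Setting the gradient to zero gives the normal equations $\bigl(\sumN\regk\regkt\bigr)\p = \sumN\regk\uk$, and whenever the sample information matrix $\sumN\regk\regkt$ is nonsingular — being also positive semidefinite, it is then positive definite — inverting it yields exactly \eqref{eq:rho-pe-lin}. This algebra is standard and I would omit its details.

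The third step closes the loop with the preceding theorem: under Assumption~\ref{ass:matching} and in the noiseless case it states that $\pd$ drives the cost to its global minimum, $\Vped\big\lvert_{\Vt=0} = 0$. Since a convex quadratic with nonsingular Hessian has a unique stationary point, which is therefore its unique global minimizer, the closed-form estimate \eqref{eq:rho-pe-lin} must coincide with $\pd$. The only genuinely non-routine ingredient — and hence the main obstacle — is guaranteeing invertibility of $\sumN\regk\regkt$: this requires the experiment to be sufficiently informative (persistence of excitation of $\Ut$ in open loop, or of $\Rt$ in closed loop, together with a filter $\Kqp$ that does not annihilate regressor directions), which is the standard identifiability caveat of least-squares estimators and should accompany the statement as a hypothesis.
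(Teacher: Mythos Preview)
Your argument is correct and is exactly the standard least-squares derivation the paper has in mind; in fact the paper states this corollary without proof, treating \eqref{eq:rho-pe-lin} as the immediate closed-form minimizer of the quadratic \eqref{eq:v-pe-lin}. Your remark on the implicit invertibility/persistence-of-excitation hypothesis is a welcome clarification that the paper leaves unstated.
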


The greatest advantage of using the linear predictor is that the solution of the optimization problem \eqref{eq:v-pe} is analytic and given by \eqref{eq:rho-pe-lin}.
If the level of noise is low, the bias and variance of the estimate are small and this method is effective to identify the optimal controller.
However, when the data are collected with noise $v(t) \neq 0$, this is a classical \textit{errors-in-variables} identification problem, for which this particular solution is known to produce biased estimates when the noise is not negligible.
In that case, other solutions may be explored.\cite{soderstrom2018errors}

\subsubsection{Filter Design}

When Assumption~\ref{ass:matching} holds, the solution of \eqref{eq:rho-pe-lin} gives a good estimate for the ideal parameters, otherwise, the estimate may be far from the optimal parameters.
This may be remediated by designing the filter $\Kqp$ with that goal in mind.

\begin{theorem}
  When using noiseless data and the linear predictor \eqref{eq:u-lin}, the filter that will force the minimum of the prediction error cost \eqref{eq:v-pe-lin} to be the same that the minimum of the disturbance response cost \eqref{eq:v-dr} is
  \begin{equation}
    \abs*{\Kwp}^2 = \frac{\abs*{\Qwp\Qdw}^2}{\abs*{\Awp}^2}\frac{\Phid}{\Phiy},\label{eq:filter-pe-lin-1}
  \end{equation}
  where $\Phid$ and $\Phiy$ are the power spectra of the disturbance to be rejected $\Dt$ and of the output collected $\Yt$, respectively.
\end{theorem}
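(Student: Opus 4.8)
The plan is to rewrite both costs in the frequency domain, where the comparison collapses to matching the integrands pointwise in $\omega$, and then to read off $\abs{\Kwp}$.

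First I would invoke Lemma~\ref{lem:epred-lin-noiseless}, so that for noiseless data the filtered prediction error is $\eek = \Kqp\Aqp\left[\Qiqp-\Qdiq\right]\Yt$. Since $\Yt$ is quasi-stationary under Assumption~\ref{ass:noise} (together with persistency of excitation of the experiment), $\eek$ is a linear filtering of a quasi-stationary signal; letting $N\to\infty$ in \eqref{eq:v-pe-lin} and applying Parseval's theorem yields
\begin{equation*}
  \Vpe = \frac{1}{2\pi}\int_{-\pi}^{\pi}\abs{\Kwp}^2\abs{\Awp}^2\abs{\Qiwp-\Qdiw}^2\Phiy\dd\omega ,
\end{equation*}
whereas the design cost \eqref{eq:v-dr} reads
\begin{equation*}
  \Vdr = \frac{1}{2\pi}\int_{-\pi}^{\pi}\abs{\Qdw-\Qwp}^2\Phid\dd\omega .
\end{equation*}
Using the elementary identity $\Qiwp-\Qdiw = (\Qdw-\Qwp)/(\Qwp\Qdw)$, the integrand of $\Vpe$ becomes $\abs{\Kwp}^2\abs{\Awp}^2\abs{\Qdw-\Qwp}^2\Phiy/\abs{\Qwp\Qdw}^2$. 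The two integrals then coincide for \emph{every} $\p$ --- hence share the same global minimizers --- precisely when the frequency weights agree, i.e.\ when $\abs{\Kwp}^2\abs{\Awp}^2\Phiy = \abs{\Qwp\Qdw}^2\Phid$; solving for $\abs{\Kwp}^2$ gives \eqref{eq:filter-pe-lin-1}. Substituting that expression back and cancelling confirms $\Vpe\equiv\Vdr$, which closes the argument.

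The algebra here is routine; the only real subtlety is conceptual, and I would flag it explicitly. The filter \eqref{eq:filter-pe-lin-1} depends on $\Qwp$ --- hence on the unknown plant $\Gq$ --- and on $\Awp$, so it cannot be formed exactly; in practice one replaces these by an estimate or by the value at an initial controller, as is customary for VRFT-type filters, after which $\Vpe = \Vdr$ holds only approximately. Two mild regularity conditions also belong in the statement for the spectral step to be legitimate: $\Yt$ must be quasi-stationary, and $\Phiy$ must be nonzero almost everywhere so that the division by it in \eqref{eq:filter-pe-lin-1} is well defined.
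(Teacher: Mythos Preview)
Your argument is correct and follows essentially the same route as the paper: invoke Lemma~\ref{lem:epred-lin-noiseless}, pass both costs to the frequency domain via Parseval, use the identity $\Qiwp-\Qdiw=(\Qdw-\Qwp)/(\Qwp\Qdw)$, and match integrands to solve for $\abs{\Kwp}^2$. The practical caveats you flag about the filter's dependence on $\Qwp$ and $\Awp$ are addressed in the remarks immediately following the theorem, so they need not appear in the proof itself.
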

\begin{proof}
  In the noiseless case, using Parseval's theorem and Lemma~\ref{lem:epred-lin-noiseless}, we are able to write the cost \eqref{eq:v-pe-lin} as
  \begin{align}
    \Vpe &= \frac{1}{2\pi}\int_{-\pi}^{\pi}\abs*{\Kwp\vphantom{\Big|}}^2 \abs*{\Awp\vphantom{\Big|}}^2
        \abs*{ \Qiwp - \Qdiw }^2 \Phiy\dd\omega.\label{eq:v-pe-parseval}
  \end{align}
  Compare it with the frequency-domain version of the disturbance response cost \eqref{eq:v-dr}
  \begin{align}
    \Vdr &= \frac{1}{2\pi}\int_{-\pi}^{\pi}\abs*{ \Qdw - \Qwp }^2 \Phid\dd\omega,\label{eq:v-dr-parseval}
  \end{align}
  also obtained with Parseval's theorem, and where $\Phid$ is the power spectrum of the disturbance to be rejected $\Dt$.
  Equating the right side of \eqref{eq:v-pe-parseval} and \eqref{eq:v-dr-parseval} and solving for the filter, then using the identity
  \begin{equation}
    \frac{\Qdw - \Qwp}{\Qiqp - \Qdiw} = \Qwp\Qdw,
  \end{equation}
  results in \eqref{eq:filter-pe-lin-1}, which is the filter that makes the two functions the same.
\end{proof}

\begin{remark}
  Observe that the filter \eqref{eq:filter-pe-lin-1} may be approximated by
  \begin{equation}
  \abs*{\Kwp}^2 \approx \frac{\abs*{\Qdw\Qdw}^2}{\abs*{\Awp}^2}\frac{\Phid}{\Phiy},\label{eq:filter-pe-lin}
  \end{equation}
  as long as $\Qqps$ and $\Qdq$ are similar.
\end{remark}

\begin{remark}
  Note that all the quantities involved in \eqref{eq:filter-pe-lin} are available or may be estimated from data.
  Unfortunately, even after the approximation, the filter still depends on the parameters; therefore, an iterative optimization procedure must be employed.
  A possible option is to start with $A(q) = 1$ in the filter's denominator, then solve a least squares problem and update the filter at each iteration.
\end{remark}

\begin{remark}
  The filter proposed is optimal in the absence of noise.
  However, if there is significant noise, the filter may amplify the noise effects and the minima will not be close.
  The same is valid for the other filters proposed later.
\end{remark}

\subsection{Nonlinear predictor}

As before, let us calculate now the prediction with the non-linear predictor and noiseless data.


\begin{lemma}\label{lem:epred-nlin-noiseless}
  In the absence of noise, the prediction error obtained from the non-linear predictor \eqref{eq:u-nlin} is given by
  \begin{equation}
    \ee = \left[ \Qiqp - \Qdiq \right] \Yt,\label{eq:epred-nlin-noiseless}
  \end{equation}
  regardless if the data are collected open-loop or closed-loop.
\end{lemma}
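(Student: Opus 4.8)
The plan is to work directly from the definition of the non-linear prediction error \eqref{eq:epred-nlin}, $\enl = \Utv - \Ciqp\Efv$, substituting the virtual signals $\Utv$ from \eqref{eq:u-virt} and $\Efv$ from \eqref{eq:ef-virt}. This gives $\enl = \Ut - \Qdiq\Yt + \Ciqp\Cfq\Yt$, and using \eqref{eq:ci-cf} to collapse $\Ciqp\Cfq$ into $\Cqp$ leaves $\enl = \Ut - \Qdiq\Yt + \Cqp\Yt$. The point to notice is that, in contrast with the proof of Lemma~\ref{lem:epred-lin-noiseless}, no factor $\Aqp$ multiplies this expression, because the OE-type predictor \eqref{eq:u-nlin} keeps the denominator $\Aqp$ in place rather than clearing it; this is exactly what makes the claimed error \eqref{eq:epred-nlin-noiseless} equal to the one in \eqref{eq:epred-lin-noiseless} with the $\Aqp$ factor removed.

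Next I would dispose of the two data-collection cases at once. With $\Vt = 0$, the open-loop relation \eqref{eq:process} reads $\Yt = \Gq\Ut$, so $\Ut = \Gqi\Yt$. In the closed-loop case with an initial controller $\Coq$, substituting the noiseless versions of \eqref{eq:u0-cl} and \eqref{eq:y0-cl}, namely $\Ut = \Gqi\Toq\Rt$ and $\Yt = \Toq\Rt$, again gives $\Ut = \Gqi\Yt$, with $\Coq$ and $\Soq$ dropping out entirely. In either situation, inserting $\Ut = \Gqi\Yt$ into $\enl = \Ut - \Qdiq\Yt + \Cqp\Yt$ yields $\enl = \left[\Gqi + \Cqp - \Qdiq\right]\Yt$, and applying the identity \eqref{eq:q-inv}, $\Qiqp = \Gqi + \Cqp$, produces exactly \eqref{eq:epred-nlin-noiseless}.

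The computation is short and purely algebraic, so I do not anticipate a genuine obstacle. A marginally more elegant route, which I would mention as a remark, is to observe that since $\Bqp = \Aqp\Ciqp$ one has $\elin = \Aqp\Utv - \Bqp\Efv = \Aqp\left(\Utv - \Ciqp\Efv\right) = \Aqp\,\enl$ directly from \eqref{eq:epred-lin-2} and \eqref{eq:epred-nlin}, so \eqref{eq:epred-nlin-noiseless} follows at once by dividing the conclusion of Lemma~\ref{lem:epred-lin-noiseless} by $\Aqp$. The only point deserving a line of care is the closed-loop branch: one must check that the noise terms of \eqref{eq:u0-cl} and \eqref{eq:y0-cl} vanish consistently so that $\Ut = \Gqi\Yt$ holds with no leftover dependence on the initial controller, exactly as in the corresponding branch of the proof of Lemma~\ref{lem:epred-lin-noiseless}. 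As a sanity check one can also note that setting $\Aqp = 1$ (equivalently $n_a = 0$) in \eqref{eq:epred-lin-noiseless} recovers \eqref{eq:epred-nlin-noiseless}, matching the earlier remark that the two predictors coincide in that case.
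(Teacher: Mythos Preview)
Your proof is correct and follows essentially the same path as the paper's: expand $\enl$ via the virtual signals, reduce to $\Ut - \Qdiq\Yt + \Cqp\Yt$, and then invoke the identity \eqref{eq:q-inv}. The paper differs only cosmetically---it carries the noise term through each case and drops it at the end rather than setting $\Vt=0$ upfront and unifying both cases via $\Ut=\Gqi\Yt$---and your side remark that $\elin=\Aqp\,\enl$ yields the result immediately from Lemma~\ref{lem:epred-lin-noiseless} is a clean shortcut the paper does not exploit.
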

\begin{proof}
  First consider open-loop data.
  In this case, the excitation signal is $\Ut$ and the prediction error from \eqref{eq:epred-nlin} is
  \begin{align}
    \enl &= \Utv - \Cqp \Etv \label{eq:epred-nlin-1}\\
      &= \Ut - \Qdiq \Yt + \Cqp \Yt \label{eq:epred-nlin-2}\\
      &= \left[ \Gqi + \Cqp - \Qdiq \right] \Gq \Ut
        - \left[ \Qdiq - \Cqp \right]\Vt \label{eq:epred-nlin-3}\\
      &= \left[\Qiqp - \Qdiq \right] \Gq \Ut
        - \left[ \Qdiq - \Cqp \right]\Vt, \label{eq:epred-nlin-ol}
  \end{align}
  where \eqref{eq:epred-nlin-1} comes from \eqref{eq:ci-cf}, \eqref{eq:ef-virt}, and \eqref{eq:epred-nlin}.
  \eqref{eq:epred-nlin-2} comes from \eqref{eq:u-virt}, while \eqref{eq:epred-nlin-3} comes from the open-loop relation \eqref{eq:process},
  and \eqref{eq:epred-nlin-ol} comes from \eqref{eq:q-inv}.
  Removing the noise-dependant term from \eqref{eq:epred-nlin-ol} results in \eqref{eq:epred-nlin-noiseless}.

  Now, consider the case where the data are collected during a closed-loop experiment with an initial controller $C_0(q)$.
  In this case the excitation signal is $\Rt$ and, again, replacing \eqref{eq:u0-cl} and \eqref{eq:y0-cl} in \eqref{eq:epred-nlin-2} allows us to write the prediction error as
  \begin{align}
    \enl &= \left[\Gqi + \Cqp - \Qdiq \right] \Toq \Rt
        - \left[ \Qdiq - \Cqp + \Coq \right]\Soq \Vt \nonumber\\
      &= \left[\Qiqp - \Qdiq \right] \Toq \Rt
        - \left[ \Qdiq - \Cqp + \Coq \right]\Soq \Vt, \label{eq:epred-nlin-cl}
  \end{align}
  where \eqref{eq:epred-nlin-cl} comes from \eqref{eq:q-inv}.
  After removing the noise-dependant term from \eqref{eq:epred-nlin-cl} the prediction error becomes the same as the one in \eqref{eq:epred-nlin-noiseless}.
\end{proof}

\begin{theorem}
  When Assumption~\ref{ass:matching} holds and the non-linear predictor \eqref{eq:u-nlin} is calculated from noiseless data, the ideal parameters map to the global minimum of
  \begin{equation}
    \Vpe = \frac{1}{N} \sumN \abs*{\Kqp \left[ \Utv - \frac{\Bqp}{\Aqp}\Efv \right]}^2,\label{eq:v-pe-nlin}
  \end{equation}
\end{theorem}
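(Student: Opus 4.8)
The plan is to reuse, almost verbatim, the argument that established the corresponding statement for the linear predictor, the only change being that Lemma~\ref{lem:epred-nlin-noiseless} now plays the role that Lemma~\ref{lem:epred-lin-noiseless} played there. First I would record the trivial but essential fact that the cost \eqref{eq:v-pe-nlin} is a finite sum of squared magnitudes, so $\Vpe \ge 0$ for every admissible $\p \in \mathcal{P}$; consequently, any parameter vector at which the cost equals zero is automatically a global minimizer, and it suffices to exhibit one such vector.

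Next I would translate Assumption~\ref{ass:matching} into a statement about $Q\inv$. By \eqref{eq:c-d} the ideal controller satisfies $\Cdq = \Qdiq - \Gqi$, and Assumption~\ref{ass:matching} provides $\pd \in \mathcal{P}$ with $C(\z,\pd) = \Cdq$; combining this with the identity \eqref{eq:q-inv}, namely $Q\inv(\z,\pd) = \Gqi + C(\z,\pd)$, yields $Q\inv(\z,\pd) = \Qdiq$, i.e. $Q(\z,\pd) = \Qdq$. Substituting $\p = \pd$ into the noiseless prediction error supplied by Lemma~\ref{lem:epred-nlin-noiseless}, equation \eqref{eq:epred-nlin-noiseless}, makes the bracketed factor $\bigl[ Q\inv(\z,\pd) - \Qdiq \bigr]$ vanish identically, so the prediction error — and hence its filtered version $\eek$ — is identically zero when evaluated at $\pd$ on noiseless data. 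Therefore $\Vped \big\lvert_{\Vt=0} = 0$, and by the nonnegativity observed above, $\pd$ attains the global minimum of \eqref{eq:v-pe-nlin}. I would close by remarking, as in the linear case, that this conclusion is independent of whether the data come from an open-loop or a closed-loop experiment, since Lemma~\ref{lem:epred-nlin-noiseless} already covers both, and independent of the particular stable filter $\Kqp$.

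The main point to be careful about — and the genuine difference from the linear-predictor theorem — is that \eqref{eq:v-pe-nlin} is \emph{not} quadratic in $\p$ (the parameters $a_i$ enter through the denominator $\Aqp$), so the cost need not be convex and a global minimizer need not be unique. Accordingly the theorem only asserts that the ideal parameters \emph{map to} a global minimum, and the argument above — the cost is nonnegative and $\pd$ drives it to zero — is exactly of the right strength for that claim; no convexity or uniqueness is needed or obtained. If one wanted more, e.g. that every global minimizer yields $Q(\z,\p) = \Qdq$, that would require an additional identifiability-type hypothesis on the parametrization \eqref{eq:c-i}, which is outside the scope of this statement.
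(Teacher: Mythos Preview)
Your proposal is correct and follows essentially the same approach as the paper: use Assumption~\ref{ass:matching} to get $Q(\z,\pd)=\Qdq$, invoke Lemma~\ref{lem:epred-nlin-noiseless} so that the noiseless prediction error vanishes at $\pd$, and conclude that the nonnegative cost \eqref{eq:v-pe-nlin} attains its global minimum there. Your additional remarks on non-convexity and the lack of a uniqueness claim are accurate and correctly distinguish this from the linear-predictor case, though the paper's own proof is terser and omits that discussion.
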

\begin{proof}
  When Assumption~\ref{ass:matching} holds, $Q(\z,\pd) = \Qdq$, then from Lemma~\ref{lem:epred-nlin-noiseless} the error evaluates to zero, and so does the cost.
  This is a global minimum because the cost \eqref{eq:v-pe-nlin} is non-negative.
\end{proof}

The cost function \eqref{eq:v-pe-nlin} is clearly non-convex.
Therefore, the solution of the optimization problem must be obtained numerically.
There are many methods that are able to find local minima of non-convex functions such as the steepest descent, the Newton method and the Nelder-Mead algorithm.
If those methods are given an initial estimate close to the global minimum they are able to converge to the correct parameters.
When using Matlab, the Nelder-Mead algorithm is available by default as the \texttt{fminsearch} command, whereas \texttt{lsqnonlin} is another option if the Optimization Toolbox is installed.

In case there is no noise and Assumption~\ref{ass:matching} holds, the minimum of \eqref{eq:v-pe-nlin} is obtained at the ideal parameters.
However, if Assumption~\ref{ass:matching} does not hold, the minimum of \eqref{eq:v-pe-nlin} differs from that of \eqref{eq:v-dr}.
In this case, the filter may be employed, as before, to bring the two minima close to each other.

\subsubsection{Filter Design}

Now we show how the filter may be designed to compensate for the violation of Assumption~\ref{ass:matching} in the case of the non-linear predictor \eqref{eq:u-nlin}.

\begin{theorem}
  When Assumption~\ref{ass:matching} does not hold and the non-linear predictor \eqref{eq:u-nlin} is obtained from noiseless data, the filter that will force the minimum of the prediction error cost \eqref{eq:v-pe-nlin} to be the same as the minimum of the disturbance response cost \eqref{eq:v-dr} is
  \begin{equation}
    \abs*{\Kwp}^2 = \frac{\abs*{\Qwp\Qdw}^2\Phid}{\Phiy}.\label{eq:filter-pe-nlin-1}
  \end{equation}
\end{theorem}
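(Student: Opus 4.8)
The plan is to mirror, step for step, the argument already used for the linear predictor. First I would invoke Lemma~\ref{lem:epred-nlin-noiseless} to write the noiseless prediction error as $\ee\big\lvert_{\Vt=0} = \left[\Qiqp - \Qdiq\right]\Yt$ and then apply Parseval's theorem to the cost \eqref{eq:v-pe-nlin}, obtaining
\begin{equation*}
  \Vpe = \frac{1}{2\pi}\int_{-\pi}^{\pi}\abs*{\Kwp}^2\abs*{\Qiwp - \Qdiw}^2\Phiy\dd\omega .
\end{equation*}
The only structural difference from \eqref{eq:v-pe-parseval} is the absence of the factor $\abs*{\Awp}^2$: the OE-type predictor \eqref{eq:u-nlin} reproduces the error $\Qiqp-\Qdiq$ directly, without a leftover $\Aqp$ multiplying it, which is precisely why the resulting filter will be simpler than \eqref{eq:filter-pe-lin-1}.

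Next I would place this expression beside the frequency-domain form \eqref{eq:v-dr-parseval} of the disturbance response cost and demand that the two integrands agree for every $\omega$, which makes the two cost functions identical and hence forces their minimizers to coincide. Solving
\begin{equation*}
  \abs*{\Kwp}^2\abs*{\Qiwp - \Qdiw}^2\Phiy = \abs*{\Qdw - \Qwp}^2\Phid
\end{equation*}
for $\abs*{\Kwp}^2$ and substituting the identity $(\Qdw - \Qwp)/(\Qiwp - \Qdiw) = \Qwp\Qdw$, already used in the proof of the corresponding theorem for the linear predictor, yields \eqref{eq:filter-pe-nlin-1}.

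I expect the main point requiring care to be the same one that arises in the linear case rather than the algebra: $\Qwp$ still depends on $\p$, so the filter in \eqref{eq:filter-pe-nlin-1} is itself parameter-dependent, and equating the integrands makes the two costs coincide only along the trajectory of $\p$. As in the remark following \eqref{eq:filter-pe-lin}, this is handled either by an iterative scheme that updates $\Qwp$ from the current estimate or by replacing $\Qwp$ with $\Qdw$ when the two are similar; everything else in the derivation is a routine transcription of the linear-predictor argument with $\Awp$ set to $1$.
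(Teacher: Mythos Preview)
Your proposal is correct and matches the paper's own proof essentially line for line: invoke Lemma~\ref{lem:epred-nlin-noiseless}, pass to the frequency domain via Parseval to obtain $\Vpe = \frac{1}{2\pi}\int_{-\pi}^{\pi}\abs*{\Kwp}^2\abs*{\Qiwp-\Qdiw}^2\Phiy\dd\omega$, equate with \eqref{eq:v-dr-parseval}, and solve for the filter using the identity $(\Qdw-\Qwp)/(\Qiwp-\Qdiw)=\Qwp\Qdw$. Your remark about the parameter dependence of the filter and its practical handling is likewise exactly what the paper records in the remarks following the theorem.
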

\begin{proof}
  In the noiseless case, again using Parseval's theorem and Lemma~\ref{lem:epred-nlin-noiseless} the cost \eqref{eq:v-pe-nlin} may be written in the frequency domain as
  \begin{equation}
    \Vpe = \frac{1}{2\pi}\int_{-\pi}^{\pi}\abs*{\Kwp}^2 \abs*{ \Qiwp - \Qdiw }^2 \Phiy\dd\omega.
  \end{equation}
  A direct comparison with the disturbance response cost function in \eqref{eq:v-dr-parseval} shows that the filter \eqref{eq:filter-pe-nlin-1} turns the two functions the same.
\end{proof}

\begin{remark}
The filter \eqref{eq:filter-pe-nlin-1} may be approximated by
\begin{equation}
\abs*{\Kwp}^2 \approx \frac{\abs*{\Qdw\Qdw}^2\Phid}{\Phiy},\label{eq:filter-pe-nlin}
\end{equation}
as long as $\Qqps$ and $\Qdq$ are similar.
\end{remark}

\begin{remark}
  The quantities involved in \eqref{eq:filter-pe-nlin} are readily available or may be estimated from data.
  Also, note that in this case the approximate filter does not depend on the parameter.
  However, the iterative optimization is still needed because of the intrinsic non-convexity of the problem.
\end{remark}

\section{Minimize the correlation between input and error}
\label{sec:correlation-error-input}

As stated before, there is a second option for identifying the controller's parameter.
This option is minimizing the 2-norm of the cross-correlation between the one-step ahead prediction error and the experiment's excitation input.
To present this option, we first recall the following definition of the cross-correlation function between the prediction error $\varepsilon(t,\bm\rho)$ and another signal $x(t)$ driving the experiment:
\begin{equation}\label{eq:corr-inf}
  \cor = \lim_{L \to \infty} \frac{1}{2L+1}\sum_{t=-L}^{L}\eek x(t-\tau),
\end{equation}
where $\tau$ represents the lag and where the explicit dependency on the parameters is omitted.
Also $\eek = \Kqp\ee$ is the prediction error filtered through a filter $\Kqp$, as before, whose design is presented later.
A finite length approximation of the cross-correlation function is:
\begin{equation}\label{eq:corr-fin}
  \ecor = \frac{1}{N} \sumN \eek x(t-\tau).
\end{equation}
Consider now a vector formed by stacking $2L+1$ samples of \eqref{eq:corr-fin} for $-L \leqslant \tau \leqslant +L$:
\begin{equation}\label{eq:rex}
  \eCor = \frac{1}{N} \sumN \eek \zz,
\end{equation}
where $\zzz \in \mathbb{R}^{2L+1}$ has the samples of the excitation signal during the experiment and where $L$ is the maximum number of lags considered at each side of the cross-correlation function.
If the data are collected during an open-loop experiment, then the excitation signal is $\Ut$, which means
\begin{equation}
  \zzt = \begin{bmatrix} u(t+L) & \dots & \Ut & \dots & u(t-L) \end{bmatrix}.
\end{equation}
On the other hand, if the data come from a closed-loop experiment, then the excitation signal is $\Rt$, making
\begin{equation}
  \zzt = \begin{bmatrix} r(t+L) & \dots & \Rt & \dots & r(t-L) \end{bmatrix}.
\end{equation}

In any case, the estimation of the controller's parameters is done by minimizing the 2-norm of the approximate correlation between the experiment's excitation signal and the prediction error $\ee$.
The estimate is given by
\begin{equation}
\hat{\p} = \arg \min_{\p} \Vc,
\end{equation}
where
\begin{align}
  \Vc &= \frac{1}{2L+1} \sum_{\tau=-L}^{L} \abs*{\ecor}^2 = \frac{\norm*{\eCor}_2^2}{2L+1} = \frac{\norm*{\sumN \zzz \eek}_2^2}{N^2(2L+1)}. \label{eq:v-corr}
\end{align}

\subsection{Linear predictor}
Consider what happens when using the linear predictor \eqref{eq:u-lin} along with the correlation approach.
Let us start by considering noiseless data.

\begin{theorem}
  Under Assumptions~\ref{ass:matching} and~\ref{ass:noise}, and considering the affine prediction error \eqref{eq:epred-lin} obtained from noiseless data, the cost \eqref{eq:v-corr} has a global minimum at the ideal parameters.
\end{theorem}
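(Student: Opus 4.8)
The plan is to reuse the structural fact underlying the two 2-norm theorems: at the ideal parameters the noiseless linear prediction error vanishes identically, and any nonnegative functional built from it must then attain its least possible value there. So first I would observe that the cost \eqref{eq:v-corr} is a squared Euclidean norm scaled by the positive constant $N^2(2L+1)$, hence $\Vc \geq 0$ for every $\p\in\mathcal{P}$; it therefore suffices to exhibit one parameter vector at which $\Vc$ equals zero.

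Next I would invoke Lemma~\ref{lem:epred-lin-noiseless}: with noiseless data the linear prediction error is $\elin = \Aqp\left[\Qiqp - \Qdiq\right]\Yt$, regardless of whether the batch was collected in open or closed loop. Under Assumption~\ref{ass:matching} there is a $\pd\in\mathcal{P}$ with $C(\z,\pd)=\Cdq$, and since $\Qiqp = \Gqi+\Cqp$ by \eqref{eq:q-inv} this is equivalent to $Q\inv(\z,\pd)=\Qdiq$. Substituting $\p=\pd$ therefore kills the bracket, so $\elind\equiv 0$ for all $t$ in the noiseless setting; filtering the zero signal through $\Kqp$ leaves it zero, so the stacked correlation vector $\eCor$ evaluated at $\pd$ is the zero vector and hence $\Vc(\pd)=0$. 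Since $0$ is the infimum of a nonnegative function and it is attained at $\pd$, that vector is a global minimizer, which is the claim. Assumption~\ref{ass:noise} is used only to ensure the excitation signal ($\Ut$ in open loop, $\Rt$ in closed loop) is quasi-stationary, so that the correlation and the cost are well defined.

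There is no serious obstacle here; the one point to handle carefully is the difference between the realized prediction error and its noiseless counterpart — under the ``noiseless data'' hypothesis the two coincide, so Lemma~\ref{lem:epred-lin-noiseless} applies directly rather than only to a term of a decomposition. I would also note that the statement asserts $\pd$ is \emph{a} global minimum, not the unique one: without a persistence-of-excitation condition on $\reg$ (or on $\zzz$) there may be additional minimizers, and uniqueness is deliberately not claimed here.
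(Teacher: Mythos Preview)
Your argument is correct and follows essentially the same route as the paper: invoke Lemma~\ref{lem:epred-lin-noiseless} to conclude that $\elind\equiv 0$ in the noiseless case, observe that this forces $\eCor(\pd)=0$ and hence $\Vcd=0$, and use nonnegativity of the squared norm to conclude global optimality. Your additional remarks about the role of Assumption~\ref{ass:noise} and the non-uniqueness of the minimizer are sound refinements, but the core mechanism is identical to the paper's proof.
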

\begin{proof}
  With noiseless data, from Lemma~\ref{lem:epred-lin-noiseless}, the prediction error vanishes at the ideal parameters, regardless if the it is obtained from open-loop or closed-loop data.
  Therefore, for noiseless data, the correlation between the input and the prediction error is also zero and $\Vcd = 0$ is a global minimum of \eqref{eq:v-corr}.
\end{proof}

\begin{corollary}
  When Assumptions~\ref{ass:matching} and~\ref{ass:noise} hold, the ideal parameters are estimated by
  \begin{equation}\label{eq:corr-lin-rho}
    \hat{\p} = \left[ \left( \sumN \zzz \regkt \right)\tran \left( \sumN \zzz \regkt \right)\right]\inv
      \left( \sumN \zzz \regkt \right)\tran  \left( \sumN \zzz \uk \right),
  \end{equation}
  which is the minimum of the quadratic function \eqref{eq:v-corr}.
\end{corollary}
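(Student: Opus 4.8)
The plan is to exploit the affine dependence of the linear prediction error on $\p$, so that minimizing \eqref{eq:v-corr} reduces to a linear least-squares problem. First I would substitute the affine prediction error \eqref{eq:epred-lin} into the filtered error, writing $\eek = \Kqp\left[\Utv - \regt\p\right] = \uk - \regkt\p$ with $\uk = \Kqp\Utv$ and $\regk = \Kqp\reg$. The correlation vector then becomes
\begin{equation*}
  \sumN \zzz\eek = \left(\sumN \zzz\uk\right) - \left(\sumN \zzz\regkt\right)\p,
\end{equation*}
which is affine in $\p$. Hence $V_\text{c}(\p)$ in \eqref{eq:v-corr} equals, up to the positive factor $N^2(2L+1)$, the squared Euclidean norm of an affine map of $\p$, i.e.\ a convex quadratic.

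Next I would impose the first-order optimality condition. Writing $R \triangleq \sumN \zzz\regkt$ and $b \triangleq \sumN \zzz\uk$, the cost is proportional to $\norm*{b - R\p}_2^2$, whose gradient vanishes exactly on the solutions of the normal equations $R\tran R\,\p = R\tran b$. Provided $R$ has full column rank — which requires at least $2L+1 \geqslant n$ lags and an excitation signal rich enough for $R\tran R$ to be invertible — the minimizer is unique and equals $(R\tran R)\inv R\tran b$, which is precisely \eqref{eq:corr-lin-rho}.

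It then remains to identify this minimizer with the ideal parameters, and for that I would invoke the theorem immediately preceding the corollary: under Assumption~\ref{ass:matching} with noiseless data, Lemma~\ref{lem:epred-lin-noiseless} makes the prediction error vanish at $\pd$, so $V_\text{c}(\pd) = 0$ attains the global minimum of the non-negative quadratic, and by uniqueness of the least-squares solution \eqref{eq:corr-lin-rho} returns $\pd$ exactly. When noise is present, Assumption~\ref{ass:noise} guarantees that the noise-induced contribution to $\sumN \zzz\eek$ is asymptotically uncorrelated with $\zzz$ and therefore averages out as $N\to\infty$, so \eqref{eq:corr-lin-rho} remains a consistent estimate of $\pd$ — this is the usual instrumental-variable argument, with the noise-free experiment signal playing the role of the instrument.

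The main obstacle I anticipate is not the algebra, which is routine, but the rank/identifiability condition: one must argue that $\sumN \zzz\regkt$ has full column rank, and this couples the number of lags $2L+1$, the order $n$ of the parametrization, and the persistence of excitation of the signal driving the experiment (plus, in the noisy case, the rate at which the noise cross-correlation decays). As in the VRFT and instrumental-variable literature that this corollary builds on, such conditions are taken for granted here, but a fully rigorous proof would need to state them explicitly.
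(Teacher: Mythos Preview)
Your proposal is correct and follows exactly the standard least-squares argument that the paper leaves implicit: the corollary is stated without proof, relying on the preceding theorem and the evident affine structure of $\eek$ in $\p$, which you spell out. Your remarks on the rank condition and the instrumental-variable consistency argument go beyond what the paper provides, but they are accurate and appropriate caveats.
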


Now, consider what happens when the experimental data are affected by some measurement noise.

\begin{lemma}\label{lem:epred-lin-noisy}
  When using noise-affected data, the prediction error calculated from the linear predictor \eqref{eq:u-lin} at the ideal parameters becomes
  \begin{equation}
    \elind = - \Aqpd \Gqi \Vt,\label{eq:epred-lin-noisy}
  \end{equation}
  regardless if the data are obtained from an open-loop or closed-loop experiment.
\end{lemma}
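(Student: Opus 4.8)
The plan is to specialize the computation already carried out in the proof of Lemma~\ref{lem:epred-lin-noiseless} to the ideal parameter vector $\pd$, rather than setting the noise to zero. Recall that in that proof the \emph{full}, noise-included linear prediction error was obtained along the way: equation~\eqref{eq:epred-lin-ol} for open-loop data and equation~\eqref{eq:epred-lin-cl} for closed-loop data. I would take those two expressions as the starting point and substitute $\p = \pd$.

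For the open-loop case, Assumption~\ref{ass:matching} together with \eqref{eq:q-inv} and \eqref{eq:c-d} gives $Q\inv(\z,\pd) = \Qdiq$, so the excitation-dependent term $\Aqpd\left[Q\inv(\z,\pd) - \Qdiq\right]\Gq\Ut$ in \eqref{eq:epred-lin-ol} vanishes identically, leaving only $-\left[\Aqpd\Qdiq - \Bqpd\Cfq\right]\Vt$. The key simplification is to evaluate the identity \eqref{eq:aq-inv} at $\pd$: since $Q\inv(\z,\pd) = \Qdiq$, it reads $\Aqpd\Qdiq = \Aqpd\Gqi + \Bqpd\Cfq$, hence $\Aqpd\Qdiq - \Bqpd\Cfq = \Aqpd\Gqi$. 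This immediately yields $\elind = -\Aqpd\Gqi\Vt$, which is \eqref{eq:epred-lin-noisy}.

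For the closed-loop case I would proceed analogously: substituting $\pd$ into \eqref{eq:epred-lin-cl} kills the $\Toq\Rt$ term and leaves $-\left[\Aqpd\Qdiq - \Bqpd\Cfq\right]\Soq\Vt - \Aqpd\Coq\Soq\Vt$. Applying the same identity to the first summand gives $-\Aqpd\Gqi\Soq\Vt - \Aqpd\Coq\Soq\Vt = -\Aqpd\left[\Gqi + \Coq\right]\Soq\Vt$. Finally, from the definition $\Soq = \left[1 + \Gq\Coq\right]\inv$ one has $\Gqi + \Coq = \Gqi\left[1 + \Gq\Coq\right]$, so $\left[\Gqi + \Coq\right]\Soq = \Gqi$, and the closed-loop error collapses to the same $-\Aqpd\Gqi\Vt$, establishing the ``regardless of open- or closed-loop'' claim.

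I do not anticipate a genuine obstacle here: the statement is essentially a corollary of the algebra in Lemma~\ref{lem:epred-lin-noiseless}. The only places requiring a moment's care are recognizing the two collapsing identities — that \eqref{eq:aq-inv} evaluated at $\pd$ reduces $\Aqpd\Qdiq - \Bqpd\Cfq$ to $\Aqpd\Gqi$, and that $\left[\Gqi + \Coq\right]\Soq = \Gqi$ in the closed-loop setting. Everything else is bookkeeping, and no appeal to Assumption~\ref{ass:noise} is needed, since the result is an algebraic identity relating the signals.
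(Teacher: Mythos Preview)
Your proposal is correct and follows essentially the same route as the paper: you start from the full expressions \eqref{eq:epred-lin-ol} and \eqref{eq:epred-lin-cl}, kill the excitation term by $Q\inv(\z,\pd)=\Qdiq$, reduce the bracket via \eqref{eq:aq-inv} at $\pd$, and finish the closed-loop case with the identity $\left[\Gqi+\Coq\right]\Soq=\Gqi$, which is exactly the paper's \eqref{eq:ginv}. Your sign bookkeeping in the closed-loop step is in fact cleaner than the paper's intermediate display, which carries a small sign typo before reaching the correct \eqref{eq:epred-lin-5}.
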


\begin{proof}
  The proof is similar to the one of Lemma~\ref{lem:epred-lin-noiseless}.
  Starting at \eqref{eq:epred-lin-ol} and considering only noise-affected open-loop data, at the ideal parameters the error becomes
  \begin{equation}
    \elind = - \left[ \Aqpd \Qdiq - \Bqpd \Cfq \right]\Vt
  \end{equation}
  Then, using \eqref{eq:aq-inv} gives \eqref{eq:epred-lin-noisy}.

  On the other hand, if noisy data are collected during a closed-loop experiment with an initial controller $\Coq$, starting at \eqref{eq:epred-lin-cl}, the ideal parameters make the error
  \begin{align}
    \elind &= -\left[ \Aqpd \Qdiq - \Bqpd \Cfq \right] \Soq \Vt
        + \Aqpd \Coq  \Soq \Vt \nonumber\\
      &= - \Aqpd \left[ \Gqi + \Coq \right]\Soq \Vt \label{eq:epred-lin-5},
  \end{align}
  where \eqref{eq:epred-lin-5} comes from \eqref{eq:aq-inv}.
  Then \eqref{eq:epred-lin-noisy} is proven by using the followin equality in \eqref{eq:epred-lin-5}:
  \begin{equation}
    \left[ G\inv(q) +  C(q) \right]S(q) = G\inv(q),\label{eq:ginv}
  \end{equation}
  which can be obtained straight from \eqref{eq:qqp} and \eqref{eq:q-inv}.
\end{proof}

\begin{theorem}
  Under Assumptions~\ref{ass:matching} and \ref{ass:noise}, and considering the affine prediction error \eqref{eq:epred-lin} obtained from noisy data, the limit of the cost \eqref{eq:v-corr} when $L \rightarrow \infty$ and $N \rightarrow \infty$ has a minimum at the ideal parameters.
\end{theorem}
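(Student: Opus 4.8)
The plan is to show that, at the ideal parameters $\pd$, the filtered prediction error collapses to a stable linear filtering of the measurement noise alone, so that Assumption~\ref{ass:noise} forces every lag of its cross-correlation with the excitation signal to vanish in the limit; since the cost \eqref{eq:v-corr} is a squared norm, hence nonnegative for every $\p$, $N$ and $L$, this will make $\pd$ a global minimizer of the limiting cost.

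First I would invoke Lemma~\ref{lem:epred-lin-noisy}, which yields $\elind = -\Aqpd\Gqi\Vt$, irrespective of whether the data are collected open-loop or closed-loop. Filtering through $K(\z,\pd)$ then gives
\[
  \varepsilon_K(t,\pd) = -K(\z,\pd)\Aqpd\Gqi\,\Vt,
\]
which is still a linear time-invariant filtering of the noise $\Vt$ --- well defined and quasi-stationary provided $\Gq$ is minimum phase (as assumed throughout for the reference model) and $K(\z,\pd)$, $\Aqpd$ are stable.

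Next, since a linear filter maps any signal uncorrelated with $x(t)$ into another signal uncorrelated with $x(t)$, Assumption~\ref{ass:noise} --- applied with $x=\Ut$ in open loop or $x=\Rt$ in closed loop --- gives $\bExp\,\varepsilon_K(t,\pd)\,x(t-\tau)=0$ for all $\tau$, i.e. $\cor\big\lvert_{\pd}=0$ at every lag. By quasi-stationarity and the standard law-of-large-numbers argument (see Reference~\citenum{ljung1999system}), for each fixed $\tau$ and each fixed $L$ the finite-length estimate satisfies $\ecor\big\lvert_{\pd}\to\cor\big\lvert_{\pd}=0$ as $N\to\infty$; hence $\lim_{N\to\infty}V_\text{c}(\pd)=\tfrac{1}{2L+1}\sum_{\tau=-L}^{L}0=0$ for every $L$, and letting $L\to\infty$ keeps the value at $0$. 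Finally, because \eqref{eq:v-corr} is a squared Euclidean norm, $V_\text{c}(\p)\geqslant 0$ for all $\p$, $N$, $L$, so the same bound survives in the limit and $\pd$ attains the global minimum.

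The main obstacle is making the double limit rigorous: one must check that the convergence $\ecor\to\cor$ as $N\to\infty$ is compatible with subsequently sending $L\to\infty$. Here it is, since each individual lag term already vanishes and the $1/(2L+1)$ normalization merely averages zeros, but a careful statement needs the quasi-stationarity bounds of Reference~\citenum{ljung1999system}. A milder point is well-posedness: the factor $\Gqi$ inherited from Lemma~\ref{lem:epred-lin-noisy} presupposes a minimum-phase plant; and the argument shows $\pd$ is \emph{a} global minimizer without claiming uniqueness --- pinning down $\pd$ among all minimizers would require an additional informativity/excitation condition on the regressor, which is beyond this statement.
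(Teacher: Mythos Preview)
Your proposal is correct and follows essentially the same route as the paper: invoke Lemma~\ref{lem:epred-lin-noisy} to reduce the error at $\pd$ to filtered noise, apply Assumption~\ref{ass:noise} to kill the correlation, and conclude by nonnegativity of the squared norm. You supply more detail than the paper's two-line argument --- in particular the discussion of the double limit, the minimum-phase caveat for $\Gqi$, and the non-uniqueness remark --- but the underlying strategy is identical.
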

\begin{proof}
  Lemma~\ref{lem:epred-lin-noisy} shows that the error at the ideal parameters is simply filtered noise, which is uncorrelated with the experimental input by Assumption~\ref{ass:noise}; therefore, $\Vcd$ is a minimum of \eqref{eq:v-corr}.
\end{proof}

\subsubsection{Filter design}

When Assumption \ref{ass:matching} does not hold, the minima of the disturbance response \eqref{eq:v-dr} and correlation costs \eqref{eq:v-corr} may differ, as happened with the prediction error cost function.
In this case, the filter may be designed to bring the two minima close together.

\begin{theorem}
  When using noiseless data and the linear predictor in \eqref{eq:u-lin}, the filter that force  the correlation cost \eqref{eq:v-corr} to be the same as the disturbance response cost \eqref{eq:v-dr} for a given disturbance signal $\Dt$ is one such that
  \begin{align}
    \abs*{\Kwp}^2 &= \frac{\abs*{\Qwp \Qdw}^2\Phid}{\abs*{\Awp}^2\abs*{\Phixy}^2},\label{eq:filter-corr-lin-1}
  \end{align}
  where $\Phixy$ is the Fourier transform of the correlation, i.e. it is the cross power spectrum of the experiment input --- either $\Ut$ or $\Rt$ --- and the output $\Yt$.
\end{theorem}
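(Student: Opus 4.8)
The plan is to repeat the frequency-domain reasoning already used for the two previous filter-design theorems, but with the correlation cost \eqref{eq:v-corr} in place of the quadratic prediction-error cost. First I would apply Lemma~\ref{lem:epred-lin-noiseless} to write the noiseless filtered prediction error as $\eek = \Kqp\Aqp\left[\Qiqp - \Qdiq\right]\Yt$, so that $\eek$ is nothing but the collected output $\Yt$ passed through a single linear operator. Hence the entries of the cross-correlation vector $\eCor$ of \eqref{eq:rex} are, as $N\to\infty$, the lags of the cross-correlation between that filtered output and the excitation signal $x(t)$ --- which is $\Ut$ in the open-loop case and $\Rt$ in the closed-loop case, exactly the two situations already covered by Lemma~\ref{lem:epred-lin-noiseless}. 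The Fourier transform of this cross-correlation sequence therefore has modulus $\abs*{\Kwp}\,\abs*{\Awp}\,\abs*{\Qiwp - \Qdiw}\,\abs*{\Phixy}$.

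Next I would apply Parseval's theorem to the correlation sequence. Letting $N\to\infty$ and then $L\to\infty$, the quantity $(2L+1)\,\Vc$ converges to
\[
  \frac{1}{2\pi}\int_{-\pi}^{\pi}\abs*{\Kwp}^2\abs*{\Awp}^2\abs*{\Qiwp - \Qdiw}^2\abs*{\Phixy}^2\dd\omega ,
\]
and, since the factor $2L+1$ does not depend on $\p$, minimizing $\Vc$ in this limit is the same as minimizing the integral above. This double-limit-plus-Parseval bookkeeping is the step I expect to demand the most care: one must argue that $\tfrac1N\sumN\zzz\eek$ tends to the true cross-correlation vector, that the truncated sum of squared lags tends to the full series, and keep track of the $N$ and $2L+1$ normalizations so that it is really the \emph{minimizer} of the cost --- not its numerical value --- that is being matched, handling these points at the same level of rigor as the earlier proofs.

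Finally I would line this integral up against the frequency-domain disturbance-response cost \eqref{eq:v-dr-parseval}, $\Vdr = \tfrac{1}{2\pi}\int_{-\pi}^{\pi}\abs*{\Qdw - \Qwp}^2\Phid\dd\omega$, and use the same algebraic identity exploited in the previous proofs, $\Qdw - \Qwp = \Qwp\Qdw\left[\Qiwp - \Qdiw\right]$, to rewrite $\abs*{\Qdw - \Qwp}^2 = \abs*{\Qwp\Qdw}^2\abs*{\Qiwp - \Qdiw}^2$. Equating the two integrands pointwise in $\omega$ and cancelling the common factor $\abs*{\Qiwp - \Qdiw}^2$ leaves $\abs*{\Kwp}^2\abs*{\Awp}^2\abs*{\Phixy}^2 = \abs*{\Qwp\Qdw}^2\Phid$, \ie exactly \eqref{eq:filter-corr-lin-1}. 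I would close, as in the quadratic case, with a remark noting that this filter depends on the unknown $\p$ through $\Awp$ and $\Qwp$, so that in practice $\Qwp$ is replaced by $\Qdw$ and the filter updated inside an iterative least-squares loop, while all remaining quantities --- in particular the cross power spectrum $\Phixy$ --- are estimable from the collected data.
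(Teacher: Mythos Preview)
Your proposal is correct and follows essentially the same route as the paper's proof: invoke Lemma~\ref{lem:epred-lin-noiseless} for the noiseless linear prediction error, pass the correlation cost \eqref{eq:v-corr} to the frequency domain via Parseval, line the resulting integral up against \eqref{eq:v-dr-parseval}, and solve for $\abs{\Kwp}^2$ using the identity $\Qdw - \Qwp = \Qwp\Qdw\left[\Qiwp - \Qdiw\right]$. If anything, your handling of the $2L{+}1$ normalization --- arguing that $(2L{+}1)\Vc$, not $\Vc$ itself, tends to the Parseval integral and that this constant factor is irrelevant for the minimizer --- is more careful than the paper, which simply writes $\lim\Vc$ equal to that integral.
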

\begin{proof}
  Considering the noiseless case, using Parseval's theorem and Lemma~\ref{lem:epred-lin-noiseless}, the frequency-domain version of \eqref{eq:v-corr} may be written as
  \begin{equation}
    \lim_{\substack{N\to\infty\\L\to\infty}} \Vc = \frac{1}{2\pi}\int_{-\pi}^{\pi}
      \abs*{\Kwp}^2 \abs*{\Awp}^2 \abs*{\Qiwp - \Qdiw}^2
      \abs*{\Phixy}^2 \dd \omega.\label{eq:v-corr-lin-parseval}
  \end{equation}
  Compare \eqref{eq:v-corr-lin-parseval} with the frequency-domain version of the disturbance response cost \eqref{eq:v-dr-parseval}.
  Equating \eqref{eq:v-dr-parseval} and \eqref{eq:v-corr-lin-parseval} and solving for the filter shows that the two costs are the same if the filter is given by \eqref{eq:filter-corr-lin-1}.
\end{proof}

\begin{remark}
  Observe that the filter \eqref{eq:filter-corr-lin-1} may be approximated by
  \begin{equation}
    \abs*{\Kwp}^2 \approx \frac{\abs*{\Qdw \Qdw}^2\Phid}{\abs*{\Awp}^2\abs*{\Phixy}^2},\label{eq:filter-corr-lin}
  \end{equation}
  when $\Qwp \approx \Qdw$.
\end{remark}

\begin{remark}
  All quantities involved in \eqref{eq:filter-corr-lin} are available or can be estimated from data, but the filter depends on the parameters even after the approximation.
  One possibility to circumvent this issue is using an iterative optimization where a least squares problem is solved at each iteration.
  Then, use $\Aqp = 1$ in the filter's denominator at the first iteration, and the filter's denominator is updated after each iteration, as before.
\end{remark}

In the noiseless case, the squared cross power spectrum may also be written
\begin{equation}
  \abs*{\Phi_{u,y}(\omega)}^2 = \abs*{\Gw}^2 \Phi_u^2(\omega),
\end{equation}
for open-loop data, and
\begin{equation}
  \abs*{\Phi_{r,y}(\omega)}^2 = \abs*{\Tow}^2 \Phi_r^2(\omega),
\end{equation}
for closed-loop data.
These equalities may be employed to simplify the filter when it is possible to use the disturbance that will be rejected as the probing signal, i.e. $\Ut = \Dt$ in open-loop or $\Rt$ = $\Dt$ in closed-loop.

\subsection{Non-linear predictor}

Now, consider what happens when the non-linear predictor \eqref{eq:u-nlin} is employed with the correlation approach.
Again, first considering noiseless data.

\begin{theorem}
  Under Assumptions~\ref{ass:matching} and \ref{ass:noise}, and considering the non-linear prediction error \eqref{eq:epred-nlin} obtained from noiseless data, the limit of the cost \eqref{eq:v-corr} when $L \rightarrow \infty$ and $N \rightarrow \infty$ has a global minimum at the ideal parameters.
\end{theorem}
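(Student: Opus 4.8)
The plan is to follow the same route as the corresponding theorem for the linear predictor, replacing Lemma~\ref{lem:epred-lin-noiseless} by Lemma~\ref{lem:epred-nlin-noiseless}. First I would recall from Lemma~\ref{lem:epred-nlin-noiseless} that, with noiseless data, the prediction error produced by the non-linear predictor is $\ee = \left[\Qiqp - \Qdiq\right]\Yt$, and that this expression is valid whether the batch is collected in open loop or in closed loop. The content of that lemma is precisely that, although the predictor is of output-error type, the noiseless error collapses to a single transfer operator acting on the measured output $\Yt$; Assumption~\ref{ass:noise} plays no active role in this noiseless argument and is carried along only for uniformity with the noisy counterpart.

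Next I would specialise this to the ideal parameters. Under Assumption~\ref{ass:matching} there is a $\pd \in \mathcal{P}$ with $Q(\z,\pd) = \Qdq$, hence $Q\inv(\z,\pd) = \Qdiq$, so the bracketed operator vanishes identically and $\enld\Big\lvert_{\Vt=0} = 0$ for all $t$. Since the prediction error is identically zero at $\pd$, every finite-length cross-correlation $\ecor$ of \eqref{eq:corr-fin} is zero, the stacked vector $\eCor$ of \eqref{eq:rex} is the zero vector, and therefore $\Vcd = 0$. This already holds for every finite $L$ and $N$, so it holds in the limit as well.

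Finally I would argue that this value is a \emph{global} minimum. The cost \eqref{eq:v-corr} equals $\norm*{\eCor}_2^2/(2L+1)$, a positive multiple of a squared Euclidean norm, hence non-negative for every admissible $\p$ and every $L,N$; the bound is preserved when $L\to\infty$ and $N\to\infty$. Consequently $0 = \Vcd \leqslant \Vc$ for all $\p \in \mathcal{P}$, which is the claim. I do not foresee a genuine obstacle: the only step deserving care is verifying --- through Lemma~\ref{lem:epred-nlin-noiseless} together with Assumption~\ref{ass:matching} --- that the cancellation at $\pd$ is exact despite the OE-type structure of the predictor; once that is in hand, non-negativity of the cost closes the argument, and, unlike in the norm-minimisation theorems, there is no need to say anything about uniqueness of the minimiser.
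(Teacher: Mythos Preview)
Your proposal is correct and follows essentially the same approach as the paper: invoke Lemma~\ref{lem:epred-nlin-noiseless} to see that the noiseless prediction error vanishes at $\pd$ under Assumption~\ref{ass:matching}, conclude that the correlation cost is zero there, and use non-negativity of $\Vc$ to certify the global minimum. The paper's own proof is more terse but identical in substance.
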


\begin{proof}
  With noiseless data, from Lemma~\ref{lem:epred-nlin-noiseless}, the prediction error becomes zero at the ideal parameters, regardless if the data are from open-loop or closed-loop experiment. Therefore, for noiseless data, the correlation between the input and the prediction error is also zero and $\Vcd = 0$ is a global minimum of \eqref{eq:v-corr}.
\end{proof}

Observe that when using \eqref{eq:epred-nlin} the cost \eqref{eq:v-corr} is non-convex; therefore, an iterative optimization must be performed.
As before, an initial guess for the parameters is required and needs to be close enough to the minimum for the estimate to converge to the correct parameters.

Now, consider what happens when using the non-linear predictor obtained from noise-affected data.

\begin{lemma}\label{lem:epred-nlin-noisy}
  When using noise-affected data, the prediction error obtained with the non-linear predictor \eqref{eq:u-nlin} at the ideal parameters becomes simply
  \begin{equation}
    \enld = -\Gqi \Vt,\label{eq:epred-nlin-noisy}
  \end{equation}
  regardless if the data are collected in open-loop or closed-loop.
\end{lemma}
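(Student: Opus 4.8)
The plan is to mirror the argument of Lemma~\ref{lem:epred-nlin-noiseless}, but this time retaining the noise-dependent terms and specializing the parameters to $\pd$. Recall that at the ideal parameters Assumption~\ref{ass:matching} gives $Q(\z,\pd) = \Qdq$, so $\Qiqp - \Qdiq$ vanishes; the point is to simplify what is left. The key algebraic fact I will use is that, by the ideal controller relation \eqref{eq:c-d}, $C(\z,\pd) = \Cdq = \Qdiq - \Gqi$, hence
\begin{equation}
\Qdiq - C(\z,\pd) = \Gqi. \label{eq:id-for-nlin-noisy}
\end{equation}

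First I would treat the open-loop case. Starting from \eqref{eq:epred-nlin-ol}, namely $\enl = \left[\Qiqp - \Qdiq \right] \Gq \Ut - \left[ \Qdiq - \Cqp \right]\Vt$, I set $\p = \pd$. The first term disappears, leaving $\enld = -\left[\Qdiq - C(\z,\pd)\right]\Vt$, and substituting \eqref{eq:id-for-nlin-noisy} yields $\enld = -\Gqi\Vt$, which is \eqref{eq:epred-nlin-noisy}.

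Next I would treat the closed-loop case. Starting from \eqref{eq:epred-nlin-cl}, $\enl = \left[\Qiqp - \Qdiq \right] \Toq \Rt - \left[ \Qdiq - \Cqp + \Coq \right]\Soq \Vt$, evaluating at $\pd$ again kills the reference-dependent term and, using \eqref{eq:id-for-nlin-noisy}, gives $\enld = -\left[\Gqi + \Coq\right]\Soq\Vt$. Finally I apply the identity \eqref{eq:ginv}, $\left[ G\inv(q) +  C(q) \right]S(q) = G\inv(q)$, with $C = \Coq$ and $S = \Soq$, to collapse this to $\enld = -\Gqi\Vt$, completing the proof.

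There is no real obstacle here: the result is essentially bookkeeping, the only thing to be careful about is not discarding the noise term prematurely (as was done in Lemma~\ref{lem:epred-nlin-noiseless}) and invoking the correct identity \eqref{eq:ginv} in the closed-loop branch so that the dependence on the initial controller $\Coq$ cancels.
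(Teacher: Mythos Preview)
Your proposal is correct and follows essentially the same route as the paper: start from the full expressions \eqref{eq:epred-nlin-ol} and \eqref{eq:epred-nlin-cl}, evaluate at $\pd$ so the signal-dependent term vanishes, use $\Qdiq - C(\z,\pd) = \Gqi$ (which the paper derives equivalently from \eqref{eq:c-d} or \eqref{eq:q-inv}), and in the closed-loop branch finish with the identity \eqref{eq:ginv} to remove the dependence on $\Coq$.
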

\begin{proof}
  First, consider only open-loop data.
  In this case, the prediction error at the ideal parameters becomes
  \begin{align}
    \enld &= - \left[ \Qdiq - \Cdq \right]v(t),
  \end{align}
  from \eqref{eq:epred-nlin-ol}, using the ideal parameters.
  Then, using \eqref{eq:c-d} results in \eqref{eq:epred-nlin-noisy}.

  Now consider the case where the noisy data are collected during a closed-loop experiment with an initial controller $C_0(q)$.
  In this case, starting at \eqref{eq:epred-nlin-cl}, the prediction error at the ideal parameters becomes
  \begin{align}
    \enld &= - \left[ \Qdiq - \Cdq + \Coq \right]\Soq \Vt = - \left[ \Gqi + \Coq \right] \Soq \Vt \label{eq:epred-nlin-4}
  \end{align}
  where \eqref{eq:epred-nlin-4} comes from \eqref{eq:q-inv}.
  Then, using \eqref{eq:ginv} in \eqref{eq:epred-nlin-4} yields \eqref{eq:epred-nlin-noisy}.
\end{proof}

\begin{theorem}
  Under Assumptions~\ref{ass:matching} and~\ref{ass:noise}, and considering the non-linear predictor \eqref{eq:u-nlin}, the cost \eqref{eq:v-corr} has a minimum at the ideal parameters.
\end{theorem}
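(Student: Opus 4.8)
The plan is to reproduce, almost verbatim, the argument already used for the linear predictor with noisy data, simply replacing Lemma~\ref{lem:epred-lin-noisy} by its non-linear counterpart, Lemma~\ref{lem:epred-nlin-noisy}. That lemma is the starting point: evaluated at the ideal parameters, the non-linear prediction error collapses to $\enld = -\Gqi\Vt$, a single fixed linear time-invariant filtering of the measurement noise, and this holds whether the batch was recorded in open loop or in closed loop. Passing the prefilter $\Kqp$ through it only composes one more linear operator, so $\varepsilon_K(t,\pd) = -\Kqp\Gqi\Vt$ is still nothing but a linear filtering of $\Vt$. As in the rest of the paper, this implicitly uses that $\Gqi$ is a legitimate stable operator, which is the standing minimum-phase hypothesis on $\Gq$.

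Next I would invoke Assumption~\ref{ass:noise}. The signal driving the experiment --- $x(t) = \Ut$ for an open-loop dataset and $x(t) = \Rt$ for a closed-loop one --- is quasi-stationary and uncorrelated with $\Vt$ for all time indices; since every linear filtering of $\Vt$ inherits that property, one obtains
\[
  \bExp\!\left[\varepsilon_K(t,\pd)\,x(t-\tau)\right] = 0 \qquad \text{for every lag } \tau .
\]
Hence the finite-data, finite-lag correlation vector $\eCor$ evaluated at $\pd$ tends to the zero vector as $N\to\infty$ and $L\to\infty$, so the limit of the cost \eqref{eq:v-corr} at the ideal parameters is $\Vcd \to 0$. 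Because $\Vc = \norm*{\eCor}_2^2 / \bigl(N^2(2L+1)\bigr) \ge 0$ for every $\p$, this limiting value $0$ is a global minimum, which is exactly the claim (read, as in the preceding noisy-data theorem, for the limiting cost rather than the finite-sample one).

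The step I expect to carry the technical weight --- more a matter of rigour than of substance --- is the passage to the limit: one must justify that the empirical correlations at $\pd$ converge to the theoretical cross-correlation $\cor$, which rests on quasi-stationarity of both $x(t)$ and the filtered noise $\Kqp\Gqi\Vt$ and is the same interchange-of-limits point that was glossed over in the earlier noisy linear-predictor theorem. Two caveats are worth keeping explicit, exactly as for the linear case: the argument only shows that $\pd$ is \emph{a} global minimiser of the limiting cost, not that it is the unique one; and for finite $N$ and $L$ there is a residual correlation vanishing with $N$ and $L$, so the exact vanishing is an asymptotic statement.
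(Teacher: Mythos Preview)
Your proposal is correct and follows essentially the same approach as the paper: invoke Lemma~\ref{lem:epred-nlin-noisy} to conclude that at $\pd$ the prediction error is filtered noise, then use Assumption~\ref{ass:noise} to argue that this is uncorrelated with the probing signal, so the correlation cost attains its minimum there. You simply make explicit the limit argument and non-negativity that the paper leaves implicit in a single sentence.
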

\begin{proof}
  Lemma~\ref{lem:epred-nlin-noisy} shows that the error at the ideal parameters becomes simply filtered noise, which is uncorrelated with the experiment input by Assumption~\ref{ass:noise}; therefore, $\Vcd$ is a minimum of the cost \eqref{eq:v-corr}.
\end{proof}

\subsubsection{Filter design}

As before, the filter may be designed to make the minimum of \eqref{eq:v-corr} close to the minimum of \eqref{eq:v-dr}.
\begin{theorem}
  When using noiseless data and the non-linear predictor \eqref{eq:u-nlin}, the filter that will force the correlation cost \eqref{eq:v-corr} to be the same as the disturbance response cost \eqref{eq:v-dr} for a given disturbance signal $\Dt$ is one such that
  \begin{equation}
    \abs{\Kwp}^2 = \frac{\abs*{\Qwp \Qdw}^2\Phid}{\abs*{\Phixy}^2},\label{eq:filter-corr-nlin-1}
  \end{equation}
\end{theorem}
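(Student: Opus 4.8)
The plan is to follow the same route as the proof of the analogous filter-design theorem for the linear predictor, the one structural simplification being that Lemma~\ref{lem:epred-nlin-noiseless} delivers a noiseless prediction error carrying no $\Awp$ factor, so that factor is absent from the resulting filter. First I would write the limiting correlation cost in the frequency domain. With noiseless data the excitation signal $x(t)$ --- namely $\Ut$ in open loop or $\Rt$ in closed loop --- and the collected output are linked by a single transfer operator, so $\Yt$ is a deterministic filtered copy of $x(t)$; combined with Lemma~\ref{lem:epred-nlin-noiseless} this makes the filtered error $\eek$ a filtered copy of $x(t)$ as well, obtained by applying $\Kqp\left[\Qiqp-\Qdiq\right]$ on top of the excitation-to-output dynamics. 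Applying Parseval's theorem to the correlation sequence exactly as in the linear-predictor case, and using that the excitation-to-output transfer times the input auto-spectrum equals the cross spectrum $\Phixy$ (the identities $\abs*{\Phi_{u,y}(\omega)}^2=\abs*{\Gw}^2\Phi_u^2(\omega)$ and $\abs*{\Phi_{r,y}(\omega)}^2=\abs*{\Tow}^2\Phi_r^2(\omega)$ recorded earlier), I obtain
\begin{equation*}
  \lim_{\substack{N\to\infty\\L\to\infty}}\Vc=\frac{1}{2\pi}\int_{-\pi}^{\pi}\abs*{\Kwp}^2\abs*{\Qiwp-\Qdiw}^2\abs*{\Phixy}^2\dd\omega,
\end{equation*}
which is the counterpart of \eqref{eq:v-corr-lin-parseval} with the $\abs*{\Awp}^2$ term removed.

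Next I would place this side by side with the frequency-domain disturbance-response cost \eqref{eq:v-dr-parseval}, equate the two integrands pointwise in $\omega$, and solve for $\abs*{\Kwp}^2$, getting $\abs*{\Kwp}^2=\abs*{\Qdw-\Qwp}^2\Phid/\bigl(\abs*{\Qiwp-\Qdiw}^2\abs*{\Phixy}^2\bigr)$. Finally, substituting the algebraic identity $\bigl(\Qdw-\Qwp\bigr)/\bigl(\Qiqp-\Qdiw\bigr)=\Qwp\Qdw$ used in the earlier filter-design proofs collapses the first two factors into $\abs*{\Qwp\Qdw}^2$ and yields \eqref{eq:filter-corr-nlin-1}.

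The step I expect to be the main obstacle is the first one: justifying that the limit of the finite-sample correlation norm in \eqref{eq:v-corr} equals the stated spectral integral. This rests on the noiseless hypothesis, so that $\eek$ is a genuine filtered copy of the excitation rather than carrying an independent stochastic term; on quasi-stationarity of $x(t)$, so that the cross-correlation and cross spectrum are well defined and related through Parseval's theorem; and on the interchange of the $N\to\infty$ and $L\to\infty$ limits --- all handled exactly as in the linear-predictor case, from which this argument differs only by the missing $\Awp$. Everything after the spectral identity is routine algebra.
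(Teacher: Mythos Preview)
Your proposal is correct and follows essentially the same route as the paper: apply Lemma~\ref{lem:epred-nlin-noiseless} together with Parseval's theorem to obtain the spectral form of the limiting correlation cost, compare it with \eqref{eq:v-dr-parseval}, and solve for $\abs{\Kwp}^2$. The paper's own proof is in fact terser than yours, stating the spectral identity \eqref{eq:v-corr-nlin-parseval} directly and then equating with \eqref{eq:v-dr-parseval}; your additional remarks on the cross-spectrum identities and on the limit interchange only make explicit what the paper leaves implicit.
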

\begin{proof}
  Considering the noiseless case, using Parseval's theorem and Lemma~\ref{lem:epred-nlin-noiseless}, the frequency-domain version of \eqref{eq:v-corr} may be written as
  \begin{equation}
    \lim_{\substack{N\to\infty\\L\to\infty}}\Vc = \frac{1}{2\pi}\int_{-\pi}^{\pi}
      \abs*{\Kwp}^2 \abs*{\Qiwp - \Qdiw}^2
      \times \abs*{\Phixy}^2 \dd \omega. \label{eq:v-corr-nlin-parseval}
  \end{equation}

  Now, compare the cost \eqref{eq:v-corr-nlin-parseval} with the frequency-domain version of the disturbance response cost in \eqref{eq:v-dr-parseval}.
  Equating \eqref{eq:v-corr-nlin-parseval} and \eqref{eq:v-dr-parseval} and solving for the filter indicates that \eqref{eq:filter-corr-nlin-1} is makes the two functions the same.
\end{proof}

\begin{remark}
  Once more, the filter in \eqref{eq:filter-corr-nlin-1} may be approximated by one such that
  \begin{equation}
  \abs{\Kwp}^2 \approx \frac{\abs*{\Qdw \Qdw}^2\Phid}{\abs*{\Phixy}^2},\label{eq:filter-corr-nlin}
  \end{equation}
  when $\Qwp \approx \Qdw$.
\end{remark}

\begin{remark}
  Once again, all the quantities involved in \eqref{eq:filter-corr-nlin} are available or may be estimated from data.
  Also, the approximate filter does not depend on the parameters.
  Unfortunately, an iterative optimization is still needed because the cost function is non-convex.
\end{remark}

\section{Examples}
\label{sec:examples}

This section presents three simulation case studies to exemplify how the proposed methodology may be employed to tune the parameters of linear feedback controllers aiming at disturbance rejection.
The system simulated is the following, arbitrarily selected,
\begin{equation}
  \Gq = \frac{1}{120}\frac{(1-0.7\z^{-1})}{(1-0.95\z^{-1})^2},\label{eq:exp-g}
\end{equation}
which is stable and has unitary static gain.
The control objective is to reject step disturbances in steady-state, which may be accomplished by including an integrator in the controller structure.
We recall that within the data-driven framework, the process model \eqref{eq:exp-g} is supposed to be unknown by the user; therefore, it is presented here for the sake of the example.
The model is needed only to generate the simulated data and to present what would be the ideal and optimal responses for comparison with the other responses obtained.

The theory presented indicates that a broad range of controllers may be tuned by these methods.
However, for these examples, we have chosen controllers from the PID family because these are a simple, powerful, and widespread type of controllers.
A PIDF (PID with filtered derivative), for example, presents the following structure:
\begin{equation}
  C^\text{PIDF}(\z,\p) = \frac{b_0 + b_1 \z^{-1} + b_2 \z^{-2}}{1 + a_1 z^{-1}}\frac{1}{1-\z^{-1}},\label{eq:pidf-exp}
\end{equation}
where the controller's fixed part comprises the integrator, while the other part has four parameters that must be identified.
A simpler PI controller, on the other hand, presents the following structure:
\begin{equation}
  C^\text{PI}(\z,\p) = \frac{b_0 + b_1 \z^{-1}}{1}\frac{1}{1-\z^{-1}},\label{eq:pi-exp}
\end{equation}
with the same fixed part.
Note that any of the parts could happen to be non-causal, depending on the choices made, but this is not a problem because all the calculations are performed offline.
However, the complete controller, must be causal, otherwise it will not be implementable.

The following reference model for the desired load disturbance sensitivity is chosen
\begin{equation}
  \Qdq = \frac{1}{120}\frac{(1-0.7\z^{-1})(1-\z^{-1})}{(1-0.9\z^{-1})(1-0.95\z^{-1})^2},\label{eq:exp-qd}
\end{equation}
which is about as fast as the open-loop process \eqref{eq:exp-g} and rejects constant disturbances in steady-state.
Within this scenario, the ideal controller, obtained from \eqref{eq:c-d}, would be
\begin{equation}
  \Cdq = \frac{12(1 - 0.95\z^{-1})^2}{(1 - 0.7\z^{-1})(1 - \z^{-1})},\label{eq:exp-cd}
\end{equation}
which is a PIDF controller and can be represented with the controller structure in \eqref{eq:pidf-exp}.

If the controller structure available is PIDF, then Assumption~\ref{ass:matching} holds and $C(\z,\pd) = \Cdq$, otherwise the methods presented yield a controller that is not optimal in terms of the cost function \eqref{eq:v-dr}.
In this case, the controller estimated may deliver performance far from the desired one.
However, as stated before, a filter may be employed to estimate a controller closer to the optimum.
These three cases are explored in the following sections.

\subsection{Matching case}

The first case study considers that Assumption~\ref{ass:matching} holds, which means a PID with derivative filter is used and $C(\z,\pd) = \Cdq$.
To generate the data, 100 Monte Carlo simulations are performed in open-loop and another 100 simulations in closed-loop, varying only the measurement noise realization $\Vt$.
The simulations in closed-loop use the controller $\Coq = C^\text{PIDF}(\z,0.5\pd)$.
During the simulations, the experiment input signal comprises 3000 samples of a square wave with levels $\pm 1$ and period 300 samples.
The measurement noise is a gaussian white process with variance $\sigma^2 = 0.0025$.
For illustrative purposes, the first period of one realization of the signals collected in open-loop is presented in Figure~\ref{fig:exp-ol}.
\begin{SCfigure}
  \includegraphics{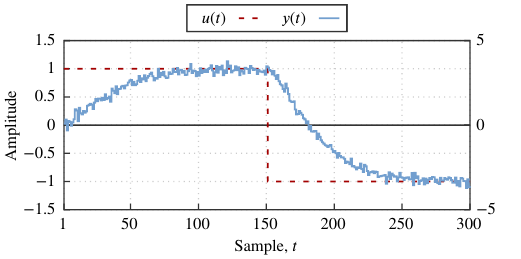}
  \caption{\label{fig:exp-ol} First period in one realization of the open-loop signals collected.
  In this figure, $\Ut$ is the probing square wave (red dashed), $\Yt$ is the noisy output signal (light blue).}
\end{SCfigure}
The first period of one realization of the closed-loop signals is also presented in Figure~\ref{fig:exp-matching-cl}, for comparison.
\begin{SCfigure}
  \includegraphics{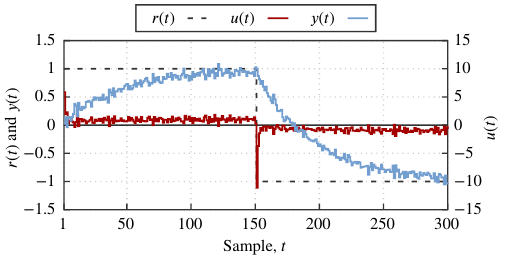}
  \caption{\label{fig:exp-matching-cl}First period in one realization of the closed-loop signals collected with $\Coq$ a PIDF controller.
  In this figure, $\Rt$ is the probing square wave (black dashed), $\Ut$ is the control action (solid red), and $\Yt$ is the noisy output signal (light blue).}
\end{SCfigure}
Observe that in this case, the two signals collected along with the probing signal are affected by the noise entering the system by the process output.
In both the cases, the controller's virtual input is affected by the noise, which means this is an errors-in-variables identification problem.

The prediction error is constructed using the collected data either through \eqref{eq:epred-lin}, for the linear predictor, or through \eqref{eq:epred-nlin}, for the non-linear one, both using $n_a = 1$ and $n_b = 3$.
When Assumption~\ref{ass:matching} holds and there is no noise, the ideal controller is achieved for any filter $\Kqp$.
Since there is noise in this simulation, a filter $\Kqp = \Qdq$ was employed to reduce high frequency noise.

First, 400 controllers are estimated by minimizing the norm of the prediction error \eqref{eq:v-pe}, one controller for each realization, type of experiment (open-loop and closed-loop) and type of predictor (linear and non-linear).
When using a linear prediction, the controller is computed analytically and using the non-linear predictor, 1000 iterations of the \verb|fminsearch| algorithm are performed, initialized at $\Coq$.
The controllers identified are employed to simulate the resulting closed-loop systems, $\Qqp$.
Finally the following approximate of the disturbance response cost function \eqref{eq:v-dr} is calculated for each controller using those simulated systems:
\begin{equation}
  \hat V_\text{dr}(\p) = \frac{1}{150} \sum_{t=1}^{150} \norm*{\left[ \Qdq - \Qqp \right]\Dt }^2,\label{eq:exp-cost}
\end{equation}
where $\Dt$ is step signal.
The resulting values for those controllers are presented through the boxplots in Figure~\ref{fig:boxplot-matching-vdft}.
\begin{SCfigure}
  \includegraphics{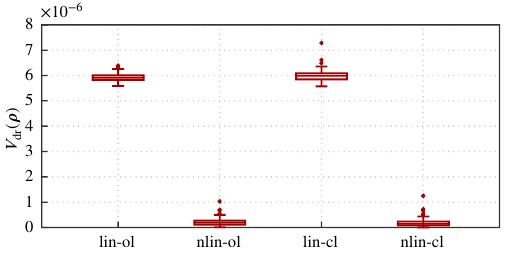}
  \caption{\label{fig:boxplot-matching-vdft} Boxplots showing the distribution of the disturbance response costs achieved by minimizing the prediction error norm when Assumption~\ref{ass:matching} held. Results are grouped by type of predictor: linear (lin) and non-linear (nlin), and type of experiment: open-loop (ol) and closed-loop (cl).}
\end{SCfigure}
Observe that all controllers achieved a very good performance, with a very small cost (oreder of $10^{-6}$).
Notice that there is no significant difference between controllers computed using open-loop or closed-loop data.
Also, note that controllers computed using the non-linear predictor achieved a better performance than the controllers computed using the linear predictor.
If there was no noise in the signals, all methods would achieve the ideal controller, so the non-linear predictor is less sensitive to noise.

The correlation algorithm that minimize the correlation cost \eqref{eq:v-corr} was also used to identify other 400 controllers, one for each realization, type of experiment and predictor.
Again, the controller is computed analytically when using the linear predictor and when using the non-linear predictor 1000 iterations of the \verb|fminsearch| algorithm are performed, starting at $\Coq$.
The filter is still the same and the number of lags is fixed at $L = 185$.
The resulting values of the cost function for these controllers are presented through the boxplots in Figure~\ref{fig:boxplot-matching-dcbt}.
\begin{SCfigure}
  \includegraphics{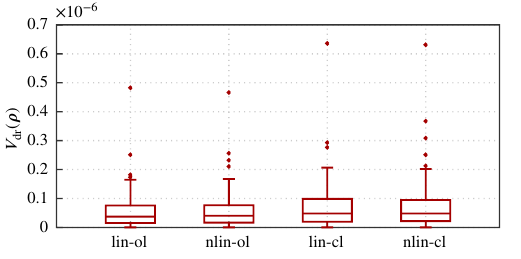}
  \caption{\label{fig:boxplot-matching-dcbt} Boxplots showing the distribution of the disturbance response costs achieved by minimizing the correlation between the prediction error and the probing signal. Results are grouped by type of predictor: linear (lin) and non-linear (nlin), and type of experiment: open-loop (ol) and closed-loop (cl).}
\end{SCfigure}
Observe that the correlation method is less sensitive to noise and there is no significant difference between the linear and non-linear predictors nor between data collected in open-loop or closed-loop.
Table \ref{tab:exp} shows a comparison between the mean and variance of the cost function $\hat V_\text{dr}(\p)$ for each method.
Again, if there was no noise in the signals, all methods would achieve the ideal controller, but the correlation approach is much less sensitive to noise and achieve smaller costs than the method that minimize the norm of the prediction error.

In mostl real-life situations Assumption~\ref{ass:matching} cannot be guaranteed to hold and in those cases the filter $\Kqp$ may be employed to reduce the distance between the minimum of the data-driven cost function and the actual cost that we want to minimize, as shown in the next examples.

\subsection{Mismatching case}

Until now, we have considered the case where the available controller structure matches the one of the ideal controller.
Now, consider the more realistic case where the controller that must be tuned has a smaller controller structure.
For the sake of the example we suppose this is a PI controller such as \eqref{eq:pi-exp}; therefore, lacking the derivative structure.
Since the ideal controller does not match the controller structure available, different disturbances will give different performances and we must choose which disturbance response we want to optimize.
We have chosen the step response, i.e. $\Dt$ is a step as before.
For completeness, we have numerically obtained the optimum controller
\begin{equation}
  C(\z,\pstar) = \frac{4.1381(1 - 0.9788\z^{-1})}{(1 - \z^{-1})},\label{eq:exp-c-opt}
\end{equation}
by minimizing the disturbance response cost \eqref{eq:exp-cost} directly.
Once again, note that in a real-life scenario this would not be possible because the process model is not identified.
Observe that $\Cdq \neq C(\z,\pd)$; therefore, some performance loss is expected with the resulting tuned controllers.

The algorithm that minimizes the prediction error's norm is used to identify $400$ controllers, one for each realization, type of experiment and type of predictor.
The filter employed is estimated as \eqref{eq:filter-pe-lin} (linear predictor) or \eqref{eq:filter-pe-nlin} (non-linear predictor).
Observe that since there is no parameter to identify on the denominator of the controller, both linear and non-linear predictors should achieve the same performance.
In the case of the linear predictor, the controller is computed directly where in the case of the non-linear predictor, 1000 iterations of the \verb|fminsearch| algorithm are performed starting at $\Coq$.
The costs estimated by \eqref{eq:exp-cost} for each controller identified are presented in Figure~\ref{fig:boxplot-filter-vdft}, grouped by experiment and predictor.
\begin{SCfigure}
  \includegraphics{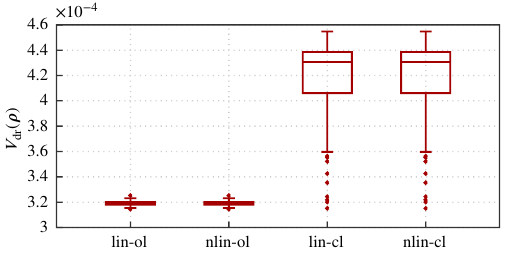}
  \caption{\label{fig:boxplot-filter-vdft}Boxplots showing the distribution of the disturbance response costs achieved by minimizing the norm of the prediction error, filtered to compensate for the violation of Assumption~\ref{ass:matching}. Results are grouped by type of predictor: linear (lin) and non-linear (nlin), and type of experiment: open-loop (ol) and closed-loop (cl).}
\end{SCfigure}
Observe that the linear and non-linear predictors achieve exactly the same performance as expected.
Observe also that the performance is worse using the PI controllers than the PIDF controllers because the reference model cannot be achieved exactly using the reduced order controller.
Notice that the controllers computed using the open-loop data achieved a better performance than the controllers computed using the closed-loop data.
Ideally (without noise), there should not be difference between these two methods since the filters \eqref{eq:filter-pe-lin} and \eqref{eq:filter-pe-nlin} should compensate the frequency differences between signals.
However, the filters are designed for noiseless case and a performance loss is expected when there is noise affecting the signals.

Then, other 400 controllers are identified using the same data by minimizing the correlation between the prediction error and the experiment input, one for each realization, type of experiment and predictor.
The number of lags employed is $L = 185$ and  the filter is estimated either through \eqref{eq:filter-corr-lin} (linear predictor) or \eqref{eq:filter-corr-nlin} (non-linear predictor).
When using the linear predictor, the controllers are computed directly and when using the non-linear predictor, 1000 iterations of the \verb|fminsearch| algorithm are performed starting at $\Coq$.
The costs estimated by \eqref{eq:exp-cost} for each controller are presented in the boxplots of Figure~\ref{fig:boxplot-filter-dcbt}, grouped by experiment and predictor.
\begin{SCfigure}
  \includegraphics{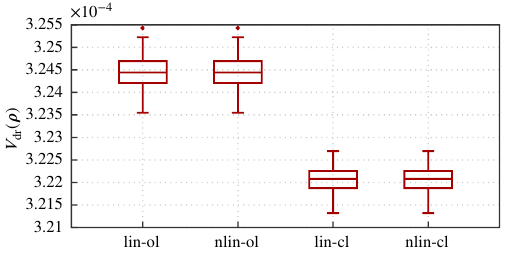}
  \caption{\label{fig:boxplot-filter-dcbt}Boxplots showing the distribution of the disturbance response costs achieved by minimizing the correlation between the prediction error, filtered to compensate for the viola of Assumption~\ref{ass:matching}, and the probing signal. Results are grouped by type of predictor: linear (lin) and non-linear (nlin), and type of experiment: open-loop (ol) and closed-loop (cl).}
\end{SCfigure}
Again the linear and non-linear predictors achieve exactly the same performance as expected, that  is worse using the PI controllers than the PIDF controllers.
Notice that now the controllers computed using the open-loop data achieved a worse performance than the controllers computed using the closed-loop data, but the difference between them is very small.
Table \ref{tab:exp} shows that for open-loop data the best performance was obtained minimizing the norm of the prediction error method while for closed-loop data the best performance was obtained with the correlation approach.
%
%
%
%
%
\newcommand{\mc}{\multicolumn}
\newcommand{\mr}{\multirow}
\begin{table}
  \centering
  \caption{\label{tab:exp}Mean and standard deviation of the costs.}
  \begin{tabular}{llcccc}
  \toprule
    &   & \mc{2}{c}{\sc open-loop} & \mc{2}{c}{\sc closed-loop} \\
  \midrule
    &   & linear & non-linear & linear & non-linear \\
  \midrule
  \mc{6}{l}{\sc matching, no filter ($\times 10^{-6}$)} \\
    & norm\hspace{4.0cm} & $5.9147$ $(\pm0.1524)$ & $0.2154$ $(\pm0.1606)$ & $5.9931$ $(\pm0.2328)$ & $0.1959$ $(\pm0.1799)$ \\
    & correlation        & $0.0564$ $(\pm0.0661)$ & $0.0578$ $(\pm0.0670)$ & $0.0722$ $(\pm0.0849)$ & $0.0750$ $(\pm0.0889)$ \\
  \midrule
  \mc{6}{l}{\sc mismatching, filter ($\times 10^{-4}$)} \\
    & norm         & $3.1899$ $(\pm0.0188)$ & $3.1899$ $(\pm0.0188)$ & $4.1635$ $(\pm0.3353)$ & $4.1635$ $(\pm0.3353)$ \\
    & correlation  & $3.2448$ $(\pm0.0037)$ & $3.2448$ $(\pm0.0037)$ & $3.2207$ $(\pm0.0031)$ & $3.2207$ $(\pm0.0031)$ \\
  \bottomrule
\end{tabular}

\end{table}
%
%
%

\section{Conclusion}
\label{sec:conclusion}

This article presented a system identification framework for data-driven controller tuning, aiming at disturbance rejection using and reference model approach.
The main idea of this work is to obtain virtual signals from experiments in open-loop or closed-loop, and then to estimate the parameters of a restricted order controller solving an optimization problem that does not depend on a process model.
It is proposed the use of two different one-step ahead predictors: one linear and another non-linear. 
Also, it is proposed two different approaches to estimate the parameters of the controller.
The first one minimizes the quadratic norm of the predicton error while the second minimizes the correlation between the prediction error and an external signal.
When restricted order controllers are employed such that the prescribed reference model can not be achieved, an extra filter is proposed to reduce the difference between the obtained and prescribed load disturbance sensitivities.
Simulations showed that proposed methods achieve very good performance both for full order and reduced order controllers.

\section*{Acknowledgements}
This study was financed in part by the Coordenação de Aperfeiçoamento de Pessoal de Nível Superior - Brasil (CAPES) - Finance Code 001.
This work was supported in part by the Conselho Nacional de Desenvolvimento Científico e Tecnológico (CNPq).

\bibliography{ref}

\begin{thebibliography}{10}
\providecommand \doibase [0]{http://dx.doi.org/}%

\bibitem{bazanella2011data}
Bazanella AS, Campestrini L, Eckhard D. {\it Data-Driven Controller Design: The
  {$H_2$} Approach}.
\newblock Dordrecht: Springer Science \& Business Media .
\newblock 2011.

\bibitem{hjalmarsson1998Aiterative}
Hjalmarsson H, Gevers M, Gunnarsson S, Lequin O. Iterative feedback tuning:
  theory and applications. {\it IEEE control systems} 1998\string;
  18(4)\string: 26--41.

\bibitem{karimi2004iterative}
Karimi A, Mi{\v{s}}kovi{\'c} L, Bonvin D. Iterative correlation-based
  controller tuning. {\it International Journal of Adaptive Control and Signal
  Processing} 2004\string; 18(8)\string: 645--664.

\bibitem{kammer2000direct}
Kammer LC, Bitmead RR, Bartlett PL. Direct iterative tuning via spectral
  analysis. {\it Automatica} 2000\string; 36(9)\string: 1301--1307.

\bibitem{karimi2007noniterative}
Karimi A, van{~}Heusden K, Bonvin D. Noniterative data-driven controller tuning
  using the correlation approach. In:  {\it Proceedings of European Control
  Conference (ECC), 2007}. IEEE; July 2--5, 2007; Kos, Greece\string:
  5189--5195.

\bibitem{campestrini2017data}
Campestrini L, Eckhard D, Bazanella AS, Gevers M. Data-driven model reference
  control design by prediction error identification. {\it Journal of the
  Franklin Institute} 2017\string; 354(6)\string: 2628--2647.

\bibitem{campi2002virtual}
Campi MC, Lecchini A, Savaresi SM. Virtual reference feedback tuning: a direct
  method for the design of feedback controllers. {\it Automatica} 2002\string;
  38(8)\string: 1337--1346.

\bibitem{formentin2013data}
Formentin S, Karimi A. A data-driven approach to mixed-sensitivity control with
  application to an active suspension system. {\it IEEE Transactions on
  Industrial Informatics} 2013\string; 9(4)\string: 2293--2300.

\bibitem{dasilva2018data}
da{~}Silva GRG, Bazanella AS, Lorenzini C, Campestrini L. Data-driven {LQR}
  control design. {\it IEEE Control Systems Letters} 2018\string; 3(1)\string:
  180--185.

\bibitem{formentin2014comparison}
Formentin S, van{~}Heusden K, Karimi A. A comparison of model-based and
  data-driven controller tuning. {\it International Journal of Adaptive Control
  and Signal Processing} 2014\string; 28(10)\string: 882--897.

\bibitem{aastrom1995pid}
{\AA}str{\"o}m KJ, H{\"a}gglund T. {\it PID controllers: theory, design, and
  tuning}. 2.
\newblock Instrument society of America Research Triangle Park, NC .
\newblock 1995.

\bibitem{szita1996model}
Szita G, Sanathanan C. Model matching controller design for disturbance
  rejection. {\it Journal of the Franklin Institute} 1996\string;
  333(5)\string: 747--772.

\bibitem{gao2014centrality}
Gao Z. On the centrality of disturbance rejection in automatic control. {\it
  ISA Transactions} 2014\string; 53(4)\string: 850--857.

\bibitem{eckhard2018virtual}
Eckhard D, Campestrini L, Boeira EC. Virtual disturbance feedback tuning. {\it
  IFAC Journal of Systems and Control} 2018\string; 3\string: 23--29.

\bibitem{dasilva2019extension}
da{~}Silva RWP, Eckhard D. Extension of the Correlation-based Tuning Method for
  Load Disturbance Rejection. In:  {\it Proceedings of the 2019 IEEE 58th
  Conference on Decision and Control (CDC)}. IEEE; December 11--13, 2019; Nice,
  France\string: 8386--8391.

\bibitem{ljung1999system}
Ljung L. {\it System Identification: Theory for the User}.
\newblock Upper Saddle River, NJ: Prentice Hall.
\newblock 2~ed. 1999.

\bibitem{soderstrom2018errors}
S{\"o}destr{\"o}m T. {\it Errors-in-Variables Methods in System
  Identification}.
\newblock Cham, Switzerland: Springer .
\newblock 2018.

\end{thebibliography}

\end{document}